\newtheorem{prop}{Proposition}
\newtheorem*{claim}{Claim}
\newtheorem{theorem}{Theorem}
\begin{document}

\title{Probabilism for Stochastic Theories}
\author{Jeremy Steeger\\
  \texttt{jsteeger@nd.edu}}
\date{\today}
\maketitle

\begin{abstract}
I defend an analog of probabilism that characterizes rationally coherent estimates for chances. Specifically, I demonstrate the following accuracy-dominance result for stochastic theories in the $C^*$-algebraic framework: supposing an assignment of chance values is possible if and only if it is given by a pure state on a given algebra, your estimates for chances avoid accuracy-dominance if and only if they are given by a state on that algebra. When your estimates avoid accuracy-dominance (roughly: when you cannot guarantee that other estimates would be more accurate), I say that they are sufficiently coherent. In formal epistemology and quantum foundations, the notion of rational coherence that gets more attention requires that you never allow for a sure loss (or `Dutch book') in a given sort of betting game; I call this notion full coherence. I characterize when these two notions of rational coherence align, and I show that there is a quantum state giving estimates that are sufficiently coherent, but not fully coherent.
\end{abstract}

\tableofcontents

%%
%% Start line numbering here if you want
%%
% \linenumbers

%% main text
\section{Introduction}\label{sec:intro}

It is quite well understood how to justify the rules of probability as rational constraints on uncertainty---at least when `uncertainty' is understood as a given agent's degrees of belief in true-or-false propositions. Such a justification is provided by any one of the numerous arguments for (classical) probabilism, the thesis that degrees of belief (i.e. credences) are rationally coherent if and only if they follow the rules of classical probability theory.
% More precisely: let us assume that a truth valuation on a set of propositions represents a possible state of affairs just in case that valuation is classical, in the sense that it follows the usual rules of classical logic.\footnote{Explicitly, a truth valuation for a set of propositions closed under the usual connectives $\neg,\land,\lor,\to$ is classical when it follows Tarski's rules, i.e. when (i) $p$ is true if and only if $\neg p$ is false, (ii) $p$ is true and $q$ is true if and only if $p\land q$ is true, (iii) $p$ is false and $q$ is false if and only if $p\lor q$ is false, and (iv) $p$ is true and $q$ is false if and only if $p\to q$ is false.}  Given this assumption, classical probabilism asserts that the probability functions characterize the rationally coherent credences:
\begin{quote}
    \textbf{Classical probabilism.} Credences are rationally coherent if and only if they are given by a classical probability function.%\footnote{A classical probability function is a function of propositions that is $[0,1]$-valued and finitely additive, and that assigns 1 to tautologies ($\top$).}
\end{quote}
Arguments for classical probabilism---including the popular `Dutch book' argument---generally proceed by giving some precise characterization of what it means for degrees of belief to rationally cohere, and then showing that all and only classical probability functions meet the formally-stated desiderata.

Things are less clear-cut when talking about other sorts of uncertainty. In the sciences, agents often make uncertain claims that cannot be straightforwardly reduced to credence-talk. Some of these claims are \emph{estimates}, i.e. approximate calculations of the values of variables. Estimates may be calculated in a variety of ways, and they need not match a possible assignment of values.

For example, imagine we have two identical-looking coins, coin $X$ and coin $Y$. Suppose that, after many tosses, we agree that the outcomes for flips of these coins have the following chances: a three-fourths chance of heads and a one-fourth chance of tails for flips of coin $X$, and the converse chances of heads and tails for flips of coin $Y$. Now suppose I shuffle the coins behind my back and give you one of them. There are two \emph{possible assignments of chance values} for your coin---the assignment for coin $X$ and that for coin $Y$---but since you do not know which coin you got, you do not know the actual chances. Now suppose I ask you to estimate these chances. One intuitive way to do so uses credences: take the weighted average of the possible assignments of chance values, where each weight is determined by your credence in the claim that you received the appropriate coin. Another intuitive method of estimation uses actual frequencies: observe the fraction of heads in a large number of tosses of your coin, and pick the probability function with the chance of heads that is closest to this value without going over three-fourths or under one-fourth. Each of these methods can yield estimates of one-half for the chances of heads and tails. So intuitively, such estimates should rationally cohere. But this intuition raises a question: if estimates for chances need not match possible assignments of chance values, then when, precisely, are they rationally coherent?

% But these intuitions raises some questions. Given that there are at least two reasonable methods for estimating the above chances, there may well be others. How do we decide whether any given method of estimation is reasonable? At the very least, we want the results of such a method to rationally cohere. So when, precisely, are estimates for chances rationally coherent?

I argue that, in some cases, the rules of probability characterize when estimates for chances rationally cohere. In particular, I argue that, in the right setting, there is an analog of classical probabilism that applies to stochastic theories. The `right setting' is the $C^*$-algebraic framework for such theories. $C^*$-algebras are used in quantum mechanics, quantum statistical mechanics, and some approaches to quantum field theory; classical probability spaces are recovered as special cases of such algebras. In this framework, probability functions correspond to algebraic states, i.e. states on a $C^*$-algebra. Some of these states, e.g. pure states, have certain desirable properties. These and other sorts of states will be rigorously defined in Section \ref{sec:coherence}; for finite-dimensional Hilbert spaces, density matrices correspond to algebraic states and wavefunctions correspond to pure states. The thesis of probabilism for stochastic theories asserts that, when possible assignments of chance values are given by pure states, estimates for chances should match some algebraic state:
\begin{quote}
    \textbf{Probabilism for stochastic theories.} Suppose an assignment of chance values is possible if and only if it is given by a pure state. Then estimates for chances are rationally coherent if and only if they are given by an algebraic state.\footnote{This thesis requires the stipulation of values of chances, and so it is not consistent with interpretations of stochastic theories that deny the existence of any sort of objective probabilities---e.g. Bayesian personalists and QBists need not apply.}
\end{quote}
The thesis also holds when `pure state' is replaced with `vector state', `normal state', or `algebraic state' (c.f. Section \ref{sec:coherence} for definitions). To defend this thesis, I borrow a technical notion of the rational coherence of estimates from \cite{deFinetti-ToP:1974}: \emph{sufficient coherence}, or the avoidance of accuracy-dominance.\footnote{My rational coherence arguments, like those of de Finetti, are \emph{synchronic}---i.e. they invoke facts of the matter at some given time. I do not consider updating rules (Bayesian or otherwise). Synchronic arguments that differ from mine are offered by \cite{Fuchs2013a} and \cite{Pitowsky2003}; unlike my arguments, these arguments assume a strongly subjective approach to probability. Diachronic arguments are given by \cite{Wallace2012a} and \cite{Greaves2010a}; unlike my arguments, these arguments assume a many-worlds approach to quantum theory. I leave a more detailed treatment of the relationship between these arguments and my own for future work.}

Sufficient coherence is slightly weaker than the familiar notion of `Dutch book' coherence. `Dutch book' coherence requires (roughly) that you avoid a sure loss in a given sort of betting game; henceforth, I will refer to this notion as \emph{full coherence}. Sufficient coherence requires (roughly) that you cannot get closer to the possible assignments of values by changing your estimates. In other words, your estimates are sufficiently coherent just in case they \emph{avoid accuracy-dominance}---roughly, just in case you cannot guarantee that other estimates would be more accurate (c.f. \citealt{Joyce1998,Williams2012,Williams2012a,Pettigrew2016}). By contrast with full coherence, sufficient coherence only requires (roughly) that you could always get \emph{arbitrarily close} to breaking even in the aforementioned betting game (c.f. \citealt{deFinetti-ToP:1974}). Full coherence works well enough for defenses of classical probabilism. But it turns out to be too strong for at least one version of probabilism for stochastic theories---i.e. an algebraic state may yield estimates for chances that are sufficiently coherent, but not fully coherent.\footnote{Specifically, this may occur when an assignment of chance values is possible if and only if it is given by a vector state; c.f. Proposition \ref{prop:not_protected}.}

Section \ref{sec:prelim} develops a few necessary philosophical preliminaries---namely, the meaning of the terms \emph{chances}, \emph{estimates}, and \emph{rational coherence} in the above statement of probabilism for stochastic theories. Regarding chance: I adopt a very thin notion of objective probability on which a chance value is, at minimum, our best estimate for a relative frequency of an outcome in repeated observations of a given sort; I call this approach \emph{functionalism about chance}. Functionalism about chance does not require a commitment to Lewis's (\citeyear{Lewis1980,Lewis1994}) principal principle, although it naturally accommodates one version of this principle. Regarding estimates: I suppose any method of approximate calculation may yield an estimate, so long as the result is a real number. In particular: estimates may be calculated using credences (i.e. they may be credence-weighted averages), but they need not be. Regarding rational coherence: I argue that this notion is adequately captured by the avoidance of accuracy-dominance---i.e. by sufficient coherence.

Section \ref{sec:coherence} presents four main technical results. First: I give a topological characterization of sufficient coherence in infinite dimensions (c.f. Proposition \ref{prop:suff_res}). Second: supposing that pure, vector, normal, or all states on a given $C^*$-algebra yield the possible assignments of chance values, I show that estimates for chances are sufficiently coherent (i.e. they avoid accuracy-dominance) if and only if they are given by a state on that algebra (c.f. Theorem \ref{thm:gen}).\footnote{My definition of a normal state, which is somewhat non-standard, requires just that a state is a countable convex combination of vector states. This aligns with the usual definition in the context of concrete von Neumann algebras. For more details, c.f. Section \ref{sec:coherence_quantum}.} Third: I note a condition that characterizes when full coherence and sufficient coherence align (extending the work of \citealt{Kuhr2007, Williams2012,Williams2012a}; c.f. Propositions \ref{prop:full_suff} and \ref{prop:full_nec}). Fourth: I show that sufficient coherence is needed for a given application to quantum states (c.f. Proposition \ref{prop:not_protected}). Section 4 concludes with a more detailed summary of these results.

% These defenses apply to a host of foundational approaches to quantum theory that endorse the objectivity of some quantum states, including (but certainly not limited to) collapse theories and modal approaches. These may be taken to fall under a certain `selectively objective' approach to Spekkens's (\citeyear{Spekkens2005}) `ontological models', which approach will be detailed below.\footnote{For a discussion of how various foundational approaches specify such ontological models, see Leifer (\citeyear{Leifer2014a}).} The generality of my approach affords applications to a host of quantum theories---viz. quantum mechanics, quantum statistical mechanics, and (algebraic) quantum field theory---via the ontological models and `quantum logics' that may be associated with these theories.\footnote{By `quantum logics' I will generally mean either partial algebras of projections or effect algebras.}

% Finally, any of the four aforementioned choices of possible objective states renders the chances \emph{noncontextual}---that is, they do not depend on information regarding which measurements are performed. Thus, my approach affords a clear \emph{a priori} explanation of why subjective quantum states should not depend on context: the objective probability values do not depend on context, so neither should your expectations for their values. The matter is less clear on Pitowsky's (\citeyear{Pitowsky2003}) treatment of quantum states as credences, which I critique in Section \ref{sec:context}.

\section{Chances, estimates for chances, and coherence of estimates}\label{sec:prelim}

My strategy for defending probabilism for stochastic theories is to adapt old arguments---the familiar `Dutch book' and accuracy-dominance defenses of classical probabilism---to serve a new purpose. The old arguments are useful because they contain intuitive and precise definitions of the rational coherence of credences. These turn out to yield, with minimal modification, an intuitive and precise definition of the rational coherence of estimates for chances.

But I have introduced three novel elements to the old arguments: chances, estimates, and a rational coherence condition for estimates. So the reader may naturally worry that these elements pose new philosophical problems. In this section, I address this concern by giving a brief account of the notions of chance (Section \ref{sec:prelim_functionalism}) and of estimates and their rational coherence (Section \ref{sec:prelim_coherence}) that I will use in my defense of probabilism for stochastic theories.

\subsection{Estimates and functionalism about chance}\label{sec:prelim_functionalism}

In this section, I would like to sketch a notion of an objective probability value that is very thin, but that still allows for us to be uncertain about such a value. My approach is loosely inspired by Brown's (\citeyear{Brown2005}) and Knox's (\citeyear{Knox2017}) writings on spacetime. Very roughly: these authors entertain the idea that \emph{all there need be} to the concept of a spacetime is that it plays the functional role of defining inertial structure. In this spirit, I suggest that \emph{all there need be} to the concept of an objective probability value is that it plays the functional role of specifying the best \emph{estimate} for a relative frequency---where an estimate is just some approximate calculation, and one that need not be tied to an agent's credence in a claim that some particular event occurs.

This approach to chance is more directly inspired by Feynman's lucid description (in his \emph{Lectures}) of how probabilities are used and tested in the lab. Feynman defines the probability of an event $A$ as the fraction $N_A/N$, where $N_A$ is `our \emph{best} estimate' for the number of times $A$ will occur in $N$ repeated observations (\citeyear[\S6-1]{Feynman1965}). I take a probability, in this sense, to be our best estimate for some relative frequency that we intend to measure, before we actually go out and measure it. By an estimate, I mean the following:
\begin{quote}
    \textbf{Estimate.} Consider a variable whose possible values lie in some subset of the real numbers. An \emph{estimate} for that variable is a real number that approximates its actual value.
\end{quote}
So an agent may estimate one-half for the fraction of heads in three tosses of a coin, even though one-half is not one of its possible values; the agent merely thinks that one-half is a good approximation of its actual value. Such an estimate could be calculated using credences (e.g. it could be a credence-weighted average), but it need not be. As Feynman notes, sometimes the best we can do is to simply observe the number of heads in a large number of tosses, and then use this value as our estimate for the fraction of heads in our next three tosses.

Feynman gives two basic instructions on how to use frequency estimates, which I will call \emph{the predictive norm} and \emph{the inferential norm} for ease of reference:
\begin{quote}
    \textbf{The predictive norm.} If our estimate for a relative frequency is $x$, then we should expect that a measurement of this frequency will be close to $x$.
    
    \textbf{The inferential norm.} If the measurement of a relative frequency is close to $x$, then we should not change an estimate of $x$ for this frequency.\footnote{These norms are inspired in part by \cite{Papineau1997}, who asserts that `objective probability' is the thing that satisfies `the inferential link'---i.e. we use frequencies to estimate it---and `the decision-theoretic link'---i.e. we base `rational choices' on our knowledge of it. But on Feynman's tack, we need not invoke decision theory to spell out how frequency estimates inform predictions, and estimates can be formed in ways other than observation of frequencies.}
\end{quote}
There's a bit of an art to unpacking the meaning of `close'; I will stick with the standard way (which is also Feynman's way) that takes `close to $x$' to mean `no more than two or three (binomial) standard deviations away from $x$'. These norms just describe one standard use of probabilities: to make some prediction about a relative frequency that is then tested in the lab.

Functionalism about chance simply adds to the above that we may agree to designate some value as our \emph{best} estimate for the relative frequency of an outcome \emph{for a given sort of observation}---and for any given instance of an observation, we can be uncertain about its sort, and so we can be uncertain about chance. 
\begin{quote}
    \textbf{Functionalism about chance.} The chance of an outcome is, at minimum, our best estimate for its relative frequency in repeated observations \emph{of a given sort}. 
\end{quote}
To illustrate an example, consider again the two identical-looking coins, coin $X$ and coin $Y$. After flipping those coins many times, we agreed to assign values for the chances of two outcomes---heads and tails---for two sorts of observations---flips of coin $X$ and flips of coin $Y$. When I shuffle the coins and give you one of them, you are uncertain whether flips of your coin are flips of coin $X$ or flips of coin $Y$, and so you are uncertain about the chances of the outcomes for flips of your coin.

For a less colloquial example, take the Born-rule assignment of probabilities to outcomes for a given observation of a quantum system. The functionalist takes this assignment to stand in for a (defeasible) assertion made by all users of quantum theory---namely, the assertion that those are our best frequency estimates for that sort of observation. And an agent may well be unsure about the sort of a given observation---e.g. they may be unsure of which wavefunction should be used to describe the system being measured, or of which projection should be used to describe the measurement---and so they may be unsure of which Born-rule estimates are best.\footnote{Note, too, that the functionalist's tack allows for chances to be inaccessible \emph{in principle}: we can posit that there are observations whose sort we cannot discern. And so, in particular, the functionalist can \emph{also} make sense of the chances ascribed by any non-deterministic hidden variable model underlying quantum theory.}

If we know what sort of observation we are making, then we know the chances of the outcomes, and we use these chances as our frequency estimates (and then proceed to follow the predictive and inferential norms for these estimates). That is, we follow a \emph{chance-estimate norm}:
\begin{quote}
    \textbf{Chance-estimate norm.} If we know that the chance of an outcome is $x$, then our estimate for its relative frequency should be $x$. 
\end{quote}
The functionalist need not attempt to \emph{justify} the predictive, inferential, and chance-estimate norms. That is: they need not attempt to explain \emph{why} a given frequency estimate is `best'. Such an attempt may enter the territory of the usual philosophical analyses of objective probability (e.g. hypothetical frequentism, propensity views, best-systems analyses, and so on). Functionalism does allow for the possibility that chances correspond to objective properties that are `out there in the world', as some of these analyses suggest (and so it is thinner than the `no-theory theory' of \cite{Sober2010} and the pragmatic approach of \cite{Healey2012}, which both deny this possibility). It only commits to the claim that chances are \emph{at least} intersubjective, in the thin sense that multiple subjects agree on what the best frequency estimates are. Only the strong subjectivist (i.e. the Bayesian personalist) refuses to consider even this thin sort of agreement. Such subjectivists---e.g. \cite{deFinetti-ToP:1974}, or the QBists \cite{Fuchs2013a}---need not apply.\footnote{Functionalism about chance may deviate from Feynman's notion of probability on the matter of `objectivity'. Feynman's view is hard to discern: he denies that there is such a thing as a `true' or `correct' probability, as he thinks that this notion cannot be made `logically consistent'; nonetheless, he maintains that probabilities are only subjective `in a sense' (\citeyear[\S6-3]{Feynman1965}). I think this ambiguity has its root in one way that Feynman seems to define probability, which runs as follows:
\begin{quote}
    \emph{By the `probability' of a particular outcome of an observation we mean our estimate for the most likely fraction of a number of repeated observations that will yield that particular outcome.} (\citealt[\S6-1]{Feynman1965})
\end{quote}
If this is a definition, then the qualifier `most likely' poses a problem. As Brown (\citeyear{Brown2011a}) notes, there is a `whiff of circularity'---for what does `most likely' mean if not `most probable'? On Brown's reading of Feynman, the definition escapes circularity because there is no sense in which Feynman's `probabilities' are `objective' in the sense of being `out there in the world'---although they could still be `intersubjective' in the sense I describe.

Functionalism about chance takes a different route. The functionalist does not qualify the fractions being estimated as `most likely'; the fractions being estimated are just any relative frequencies we intend to measure. `Most likely' refers to the binomial distribution, and the functionalist only uses this distribution to assign a rough range of numerical values to the term `close'. So they need not endorse an explanation of chance in terms of natural properties, but they can consistently admit the possibility of one. Here is a (somewhat sophistical) consistent model: a propensity view that asserts, for every (functionalist) chance, the existence of a dispositional property yielding actual frequencies that are distributed (roughly) binomially about that chance.}

Estimates for frequencies need not directly encode anything about an agent's credence in a claim that a particular event occurs---i.e. the chance-estimate norm need not entail any sort of norm linking chance to credence in such a claim.\footnote{For overview of such norms, see \cite{Pettigrew2012}.} However, there is a very natural chance-credence norm that the functionalist can recover if they are willing to get a bit more specific about the predictive norm. To wit: suppose that the chance of $A$ for a given sort of observation is $x$ and that you aim to perform $N$ of these observations (where it may be that $N=1$). Now suppose that the predictive norm instructs you to use the binomial distribution fixed by $x$ to set your credences in claims predicting certain values for the fraction $N_A/N$. Then, the chance-estimate norm and the predictive norm imply what \cite{Ismael2008} calls an \emph{unconditional} version of Lewis's (\citeyear{Lewis1980,Lewis1994}) principal principle:
\begin{quote}
\textbf{Principal principle (unconditional).} If we know that the chance of an outcome is $x$, then our credence in the claim that it occurs should be $x$.
\end{quote}
As \cite{Pettigrew2012} notes, this version of the principal principle is an \emph{externalist} norm: it references a quantity---chance---to which we are not guaranteed access. This reflects the fact that the chance-estimate norm is externalist in precisely this way.

To briefly review: for a variable whose possible values lie in some subset of the reals, an \emph{estimate} for that variable is a real number that approximates its actual value. Estimates can be calculated in many ways, and they need not invoke credences. Functionalism takes chance to be, at minimum, our agreed-upon \emph{best} frequency estimate \emph{for a given sort of observation}---and since we can be uncertain about the sort of an observation, we can be uncertain about chance. The usual norms for frequency estimates recommend---at the \emph{very} least---that when we know the best estimate, we should use it. And if we take a more specific approach to these norms, we recover an unconditional version of Lewis's principal principle.

Once we have taken this account of chance on board, we allow for the possibility that agents may not have access to chances. Neither the chance-estimate norm nor the unconditional principal principle instruct agents on what to do when they are uncertain about chance. This task is taken up by probabilism for stochastic theories.

\subsection{Estimates for chance and coherence of estimates}\label{sec:prelim_coherence}

In the previous section, I sketched a thin notion of objective probability---functionalism about chance---that makes chances variables with objective (i.e. at least intersubjective), real-number values. On this view, agents know the possible values of a chance, but they may be uncertain about its actual value. Thus, they could use any of the usual tools for expressing this uncertainty. In particular, they could express credences in claims assigning various values to this chance. Or they could calculate an \emph{estimate} for the chance value---literally, an estimate for our best frequency estimate.

Consider, again, our weighted coins $X$ and $Y$. You may be equally confident that I gave you coin $X$ and that I gave you coin $Y$, in which case you may use a credence-weighted average to calculate your estimates for the chances of heads and tails for flips of your coin. But you may reasonably wish to withhold credence. In this case, you may you may prefer the second method of estimation noted in the introduction: toss the coin one hundred times, note the resulting fraction of heads, and use the closest probability function with a chance of heads between three-fourths and one-fourth to fix your estimates for the chances of the outcomes.

So we have at least two reasonable methods for calculating an estimate for chance, and either of these methods may yield estimates that do not match a possible assignment of chance values. This immediately raises the question: what \emph{constrains} reasonable estimates for chances---that is, precisely when are such estimates \emph{rationally coherent}?\footnote{\label{footnote:GPP}If the reader is partial to the unconditional principle principal, they may also ask: what constrains our credence in the face of uncertainty about chance? An answer is provided by Ismael's (\citeyear{Ismael2008}) \emph{generalized principal principle} (GPP), which instructs an agent to match their credences to a credence-weighted average of possible assignments of chance values (specifically, a countable convex combination of such assignments). My imminent arguments for probabilism for stochastic theories may afford a novel defense of (a slightly modified form of) GPP---but I will not develop such a defense here. See also footnote \ref{footnote:GPP_norm}.} I will refer to this as \emph{the functionalist's question}:
\begin{quote}
    \textbf{The functionalist's question.}  When are estimates for chances \emph{rationally coherent}?
\end{quote}
Probabilism for stochastic theories provides one compelling answer to this question: in some situations (e.g. when an assignment of chance values is possible if and only if it is given by a pure state), estimates for chances are rationally coherent if and only if they are given by a state on a $C^*$-algebra.\footnote{Note that this question is \emph{not} answered by de Finetti's representation theorem (\citealt{deFinetti-ToP:1974,Greaves2010a}) or its quantum analog (\citealt{Caves2002}). These representation theorems show, roughly, that an exchangeable credence can be recovered as a probabilistic mixture of objective probability values; they do not show that such mixtures satisfy some rational coherence condition.}

To lightly motivate this answer, note that it recovers textbook orthodoxy concerning one use of density operators in quantum theory---namely, their use in expressing ignorance about the true state of a system. Density operators are often defined as probabilistic mixtures of one-dimensional projections fixed by wavefunctions. When these mixtures are used to express ignorance about which wavefunction describes a system, these operators are referred to as \emph{proper mixtures}, to contrast them with improper mixtures---i.e. density operators that arise from a description of a subsystem of some larger system described by a wavefunction. The orthodoxy concerning proper mixtures dates back to von Neumann, who uses a density operator to describe a system when `we do not know what state is actually present', where the `actually present' state is described by a wavefunction (\citeyear[pp. 295--296]{Neumann1955}). One modern iteration of this idea is a certain perspective on the use of maximum likelihood estimation (MLE) in quantum state tomography, which restricts `estimates for true states' to density operators (even when non-positive operators would reproduce observed frequencies); we may suppose that `true states' are fixed by wavefunctions to recover something like von Neumann's idea.\footnote{See \cite{Blume-Kohout2010} for a critical introduction to this method. I leave a more detailed discussion of the role of rational coherence arguments in state tomography for future work.} As noted, for finite-dimensional quantum systems, density operators correspond to algebraic states on the relevant algebra (i.e. the bounded operators on a finite-dimensional Hilbert space), and wavefunctions correspond to pure states on that algebra. So probabilism for stochastic theories gives an answer to why density operators should be used as `estimates for true states'---estimates for chances rationally cohere just in case they are given by some density operator. Unfortunately, we do not have a \emph{defense} of this answer.

You could reasonably question the need for a defense. You may think that it is self-evident that estimates for chances should always be calculated as credence-weighted averages. In this case, you may wish to \emph{stipulate} that estimates rationally cohere just in case they arise from such an average. Then the aptness of density operators is built into their definition---since, as noted, they are often \emph{defined} as probabilistic mixtures.\footnote{\label{footnote:GPP_norm}This point can be expanded into an argument that credences should match normal states, using GPP (c.f. footnote \ref{footnote:GPP}): suppose that an assignment of chance values is possible if and only if it is given by a vector state on $\mathcal{B(H)}$; note that all countable convex combinations of such assignments extend to normal states, and that all normal states restrict to such combinations (c.f. \citealt[Theorem 7.1.12]{KadisonRingrose:vol2}); suppose credences should be given by some such combination, i.e. suppose GPP; thus, credences should be given by a normal state. For an approach to this claim that uses an internalist version of Lewis's principal principle, see \cite[Section 5.1]{Arageorgis2017}.} But presently, I will not assume that estimates must be calculated in this way.

Moreover, you may take the Kolmogorovian axioms to be obvious principles constraining both rational credences and rational estimates for chances. Then you may not need a defense of either classical probabilism or its stochastic analog. To wit:  Gleason's theorem and its several algebraic variants (\citealt{Gleason1957,Busch2003,Redei2007}) establish that in nearly all cases, probabilities arising from appropriate states on $C^*$-algebras are all and only those functions following appropriate analogs of Kolmogorov's rules. But presently, I will not take Kolmogorov's rules to be self-evident.

My strategy, then, for giving a defense of probabilism for stochastic theories is to first give a \emph{general} answer to the functionalist's question---an answer that does not assume Kolmogorov's rules and that does not require any particular method of calculating estimates. This strategy is adapted from accuracy-dominance arguments for classical probabilism, which arguments similarly do not depend on Kolmogorovian axioms or a given way of forming credences (\citealt{Joyce1998,Williams2012,Williams2012a,Pettigrew2016}). I begin with the intuition that the goodness of your estimates turns on how accurate they are. Then I suggest that your estimates rationally cohere just in case you could not use your knowledge of what values are possible to guarantee that other estimates would be more accurate. This allows for a technical definition of the rational coherence of estimates: sufficient coherence.

Accuracy-dominance arguments have their roots in the version of the `Dutch book' argument that appears in de Finetti's \emph{Theory of Probability} (\citeyear{deFinetti-ToP:1974}), which has two felicitous features that are often overlooked. First: it uses a condition of rational coherence that applies to \emph{estimates} for \emph{any real-valued `random' variables}---where the variables are only `random' in the sense that their actual values are unknown to an agent. De Finetti refers to an estimate as a `prevision' or a `foresight'; only when an estimate approximates the truth value of a true-or-false proposition does this notion fully align with the familiar notion of credence. In the previous section, I briefly noted that de Finetti dismisses any objective notion of probability, even the thin intersubjective notion proposed by the functionalist. He could not make sense of an `estimate for chance'---chances, for him, could never be the right sort of variables. But, if we take functionalist's tack, we merely \emph{assert} that they are the right sort of variables. And so de Finetti's rational coherence condition applies just as well to them.

Second: the argument uses sufficient coherence to define the rational coherence of estimates, and this notion has both conceptual and technical advantages over full coherence. As noted, an agent's estimates are sufficiently coherent when they could not surely be more accurate. Conceptually: this notion affords a definition of rational coherence that reflects the goodness of estimates as estimates, i.e. as approximations of a value, and it avoids the well-known issues with a pragmatic approach to `Dutch books'. Technically: this notion is needed for one version of probabilism for stochastic theories, as it turns out that there is an algebraic state that may be sufficiently but not fully coherent.

To precisely define sufficient coherence, first consider some set of real-valued random variables. Every assignment of values to these variables is a function from the set of variables to the real numbers. Let any such function that represents a possible state of affairs be a \emph{possible point}.
\begin{quote}
    \textbf{Possible points.} Let $\mathcal{Q} \subseteq \mathbb{R}^K$ be the set of \emph{possible points}; an element of $\mathcal{Q}$ is a function $q: K\to \mathbb{R}$, which is a point in the space $\mathbb{R}^K$.
\end{quote}
Clearly, $\mathcal{Q}$ is a subset of $\mathbb{R}^{K}$, the set of all functions from $K$ to $\mathbb{R}$. I will switch back and forth between talking about $\mathbb{R}^{K}$ as a set of functions and as a topological vector space (i.e., the $K$-fold Cartesian product of $\mathbb{R}$ equipped with the usual product topology) for convenience. In order to compare estimates to the actual values of variables, we also express our estimates with functions from the set of variables to the real numbers, i.e. with points in $\mathbb{R}^K$.
\begin{quote}
    \textbf{Estimate function.} An agent's estimates for a set $K$ of real-valued variables are given by an \emph{estimate function} $e \in \mathbb{R}^K$.
\end{quote}
Now we may give a technical definition of the rational coherence of estimates: sufficient coherence. (For ease of reference, I say that estimates are rationally/sufficiently/fully coherent when the estimate function that assigns them is rationally/sufficiently/fully coherent.) De Finetti gives two equivalent definitions of sufficient coherence in terms of betting games---one of which lightly modifies the usual notion of a `Dutch book'. We will return to this approach in Section \ref{sec:coherence_games} to relate sufficient coherence to full coherence. For now, I will take a slightly different tack: I will define sufficient coherence as the avoidance of accuracy-dominance (a definition which is nonetheless mathematically equivalent to de Finetti's).

The definition proceeds in two parts. First, let us agree to measure the relative closeness of two estimate functions to a possible point with Euclidean distance. Certainly, it is a substantive philosophical assumption that this is a good way to measure such relative distances. I will not defend this assumption at length, but I will note that it reflects common practice in both statistics and epistemology. In particular, the Brier score is a strictly increasing function of Euclidean distance, and so one estimate function has a lower Brier score (i.e. is more accurate) than another just in case it is closer in Euclidean distance to the relevant point.\footnote{See \cite[Chapters 3--4]{Pettigrew2016} for further discussion of and motivation for the Brier score as a measure of the inaccuracy of credence.}  There is a small technical hurdle: since we have not placed any restrictions on the number of variables, the set $K$ could be possibly uncountably infinite, and so we cannot, in general, use the Euclidean metric to compare points in $\mathbb{R}^K$. Fortunately, we can still compare the closeness of estimate functions to the values of any arbitrary finite number of those variables, without having to deviate from the familiar Euclidean notion.

Second, say an estimate function is \emph{sufficiently coherent} just in case no other estimate function is closer to all the possible points over some finite set of variables---that is, you cannot guarantee that some other estimate function is more accurate based solely on your knowledge of the possible points.
\begin{quote}
    \textbf{Sufficient coherence.} An estimate function $e$ is \emph{sufficiently coherent} if and only if it \emph{avoids accuracy-dominance}---that is, there is no other $e^*\in\mathbb{R}^K$ such that, for some finite subset $\Gamma$ of $K$,
    \begin{equation*}\label{eq:suff_acc}
        \text{for all } q\in\mathcal{Q}, \quad     d(q|_\Gamma,e|_\Gamma) - d(q|_\Gamma,e^*|_\Gamma)  > 0,
    \end{equation*}
    where $d$ is Euclidean distance.
\end{quote}

Sufficient coherence is a \emph{non-pragmatic} constraint: it simply notes when an agent should realize that their estimates could surely do a better job representing the actual values of some variables. To paraphrase \cite{Pettigrew2016}, this coherence notion assesses the \emph{epistemic} role of estimates---i.e. their role in representing facts about which we are uncertain---but it has nothing to say about their \emph{pragmatic} role---i.e. their role in guiding our actions.\footnote{Note that this non-pragmatic approach differs from the `depragmatized Dutch book arguments' of \cite{Howson2006} and \cite{Christensen1996}, which require agents to accept certain bets as `fair'; \cite{Maher1997} reconstructs and criticizes these authors' arguments. I do not invoke any notion of `fairness', or consider literal bets at all.}

Construed as a non-pragmatic constraint, sufficient coherence avoids the usual criticisms leveled against the classic `Dutch book' argument. If the classic argument is meant to justify the pragmatic role of credences, then it faces has a host of well-known problems. For one: by identifying credence with willingness to bet, the argument cannot account for risk-averse agents.\footnote{I take this concern to summarize a number of salient criticisms made along these lines due to \cite{Kaplan1998}.} For another: the argument does not account for cases where we cannot learn the actual truth values. This latter point reveals that a `Dutch book' strategy would fail spectacularly for justifying the pragmatic role of estimates for chances, as actual chance values may often be inaccessible.\footnote{As \cite{Gell-Mann2012} and \cite{Feintzeig2017} put it, the relevant bets are not `settleable'.}

I will not address the pragmatic role of estimates for chances presently. But it seems clear that regardless of how you \emph{use} an approximate calculation of chance, \emph{its goodness as an approximation} should be measured by how well it represents the values that you are unsure about. Intuitively, then, your estimates should rationally cohere just so long as you cannot guarantee that other estimates would do a better job playing their epistemic role. Sufficient coherence captures this intuition. Explicitly, I propose the following general answer to the functionalist's question:
\begin{quote}
    \textbf{The functionalist's answer.} Estimates for chances are rationally coherent if and only if they are sufficiently coherent.
\end{quote}
To parlay this answer into a defense of probabilism for stochastic theories, I will show that if one of several sets of states on a $C^*$-algebra is taken to yield the set of possible points, then estimates are sufficiently coherent if and only if they are given by some state on that algebra.

\section{Defending probabilism for stochastic theories}\label{sec:coherence}

This section contains the main technical results and uses them to defend probabilism for stochastic theories. In Section \ref{sec:coherence_games}, I discuss how sufficient coherence and full coherence are related to strongly separating hyperplanes, and I give a topological characterization of sufficient coherence in infinite dimensions. In Section \ref{sec:coherence_quantum}, I use this characterization to defend probabilism for stochastic theories. Finally, in Section \ref{sec:coherence_relating}, I characterize when sufficient coherence and full coherence align, and I show that there is a quantum state giving estimates that are sufficiently coherent, but not fully coherent.

\subsection{Sufficient coherence, full coherence, and strongly separating hyperplanes}\label{sec:coherence_games}

In this section, I relate sufficient coherence to the more familiar notion of full coherence, and in the process I arrive at a topological characterization of sufficient coherence in infinite dimensions. This characterization is based on de Finetti's insight that both coherence notions concern the existence of separating hyperplanes.

De Finetti's version of the `Dutch book' betting game in \emph{Theory of Probability} handily applies to any estimate function $e$ for any set of real-valued variables $K$ with possible points $\mathcal{Q}$.\footnote{De Finetti allows for variables take on complex values, as well, but does not explicitly treat this case in his proofs (\citeyear[p. 25, pp. 76--78]{deFinetti-ToP:2017}). It seems quite clear that his arguments should have a natural extensions to complex-valued random variables. That this extension is not immediate is due to the fact that assessment of coherence (as we will see) turns out to rely on hyperplane separation theorems from real analysis. Thus, some (uncomplicated, but subtle) translations are in order. Nonetheless: to keep things straightforward, I will stick to real numbers.} The game imagines a bookie, who picks some finite subset $\Gamma$ of $K$, and assigns dollar-value stakes to each variable. Let us treat this assignment as a \emph{stake function} $s: \Gamma\to \mathbb{R}$. Now you learn the following rule: if you choose play the game for some event $k$, the bookie will pay out (or collect, if $s(k)$ is negative) a dollar amount determined by $s(k)\cdot q(k)$ (presuming $q\in\mathcal{Q}$ gives the actual value of $k$). Given this, and given their expectation $e$, we suppose that the agent is willing to put down (or receive from the bookie, if $s(k)$ is negative) a dollar amount determined by $s(k) \cdot e(k)$ in order to play the game for $k$. We can formally describe this game with the following function:
    \begin{equation*}
    \mathcal{D}_{s}(q,e) := \sum_{k \in \Gamma     } s(k) [ q(k) - e(k)].
    \end{equation*}
Your estimate function \emph{admits a `Dutch book'} if the bookie can pick some finite set of variables and some stake function such that you are \emph{guaranteed} to lose money in this game.
\begin{quote}
    \textbf{`Dutch book'.} An estimate function $e$ \emph{admits a `Dutch book'} if and only if there is some finite subset $\Gamma$ of $K$ and stake function $s$ such that
    \begin{equation*}
    \text{for all } q\in\mathcal{Q}, \quad \mathcal{D}_{s}(q,e) < 0.
    \end{equation*}
\end{quote}
Full coherence simply requires that agents avoid `Dutch books'.
\begin{quote}
    \textbf{Full coherence.} An estimate function $e$ is \emph{fully coherent} if and only if it does not admit a `Dutch book'.
\end{quote}
In de Finetti's early work (\citeyear{deFinetti1931,deFinetti-F:1937}), the story ends here. In \emph{Theory of Probability}, however, he deems the notion of full coherence too strong, and he replaces it with sufficient coherence. %This notion of rational coherence---that agents should avoid a sure loss---has gained a tremendous amount of traction in both formal epistemology (\citealt{Hajek2009}) and quantum foundations (\citealt{Fuchs2013a}).

De Finetti first tackles sufficient coherence by very slightly weakening the definition of full coherence. He reasons that there is nothing practically to be gained by eliminating a sure loss, if this loss can be made arbitrarily small. Thus, instead of requiring that agents avoid a sure loss, he merely requires that they can always get \emph{arbitrarily close to breaking even}. If an agent \emph{cannot} get arbitrarily close to breaking even, we say their estimate function \emph{admits a strong `Dutch book'}:
\begin{quote}
    \textbf{Strong `Dutch book'.} An estimate function $e$ \emph{admits a strong `Dutch book'} if and only if there is some finite subset $\Gamma$ of $K$, stake function $s$, and $\epsilon > 0$ such that
    \begin{equation*}\label{eq:dutch_def}
    \text{for all } q\in\mathcal{Q}, \quad \mathcal{D}_{s}(q,e) < -\epsilon.\footnote{This is de Finetti's `first criterion' for sufficient coherence (\citeyear[\S3.3]{deFinetti-ToP:1974}).}
    \end{equation*}
\end{quote}
It turns out that an estimate function is sufficiently coherent just in case it does not admit a strong `Dutch book'. To see why---and to obtain a topological characterization of sufficient coherence in $\mathbb{R}^K$---it helps to note that `Dutch books' are deeply related to separating hyperplanes, a familiar object of study in functional analysis.

Indeed, it is not hard to see that the existence of a strong `Dutch book' aligns with the existence of a \emph{strongly} separating hyperplane. A \emph{hyperplane} $[f=\alpha]$ refers to a set of points whose image under the non-zero linear functional $f$ is the real number $\alpha$. It will also be useful, shortly, to refer to $[f\leq\alpha]$ and $[f\geq\alpha]$ (resp. $[f < \alpha]$ and $[f > \alpha]$), the closed (resp. open) half spaces of such a hyperplane; these contain all points whose image under $f$ satisfies the specified relationship to $\alpha$. Finally, a hyperplane $[f=\alpha]$ \emph{strongly separates} a point from a set if there is some $\epsilon>0$ such that the point lies in $[f\leq\alpha]$ and the set lies in $[f\geq\alpha+\epsilon]$ (or vice versa). So now (recalling that $\mathbb{R}^\Gamma$ is self-dual) we have the following:
\begin{quote}
    \textbf{Strongly separating hyperplane.} Let $\mathcal{Q}|_{\Gamma}$ be the set of all restrictions of functions in $\mathcal{Q}$ to the finite subset $\Gamma$ of $K$. There is a \emph{strongly separating hyperplane} between a point $e|_\Gamma$ and the set $\mathcal{Q}|_\Gamma$ in $\mathbb{R}^\Gamma$ if and only if there is some $s \in \mathbb{R}^\Gamma$, $\alpha\in\mathbb{R}$, and $\epsilon > 0$  such that
    \begin{equation*}\label{eq:hyperplane}
    \text{for all } q \in \mathcal{Q}, \quad s     \cdot q|_\Gamma \leq \alpha \quad     \text{and} \quad s \cdot e|_\Gamma \geq     \alpha+\epsilon.
    \end{equation*}
\end{quote}
And so $e$ admitting a strong `Dutch book' is equivalent to it being strongly separated (by some hyperplane) from the set of possible points $\mathcal{Q}$ over some finite set of variables $\Gamma$.

This happy coincidence allows us to make full use of several powerful separating hyperplane theorems. The most useful of these has the following consequence: if a nonempty closed convex subset of $\mathbb{R}^n$ does not contain $e$, then there is a hyperplane strongly separating this subset from $e$.\footnote{This follows from a more general strong separating hyperplane theorem, for which see, e.g., \cite[p. 207]{Aliprantis2006}.}

De Finetti notes that this result allows us to topologically characterize, in finite dimensions, when an estimate function admits a strong `Dutch book'. First, let $\mathcal{Q}|_\Gamma$ be the set of all restrictions of functions in $\mathcal{Q}$ to the finite subset $\Gamma$ of $K$. Then denote the smallest closed convex set containing $\mathcal{Q}|_\Gamma$ by $\overline{\mathrm{co}}(\mathcal{Q}|_\Gamma)$ (this is the \emph{closed convex hull} of $\mathcal{Q}|_\Gamma$). On the one hand: if $e$ is contained in $\overline{\mathrm{co}}(\mathcal{Q}|_\Gamma)$, then it cannot be strongly separated from $\mathcal{Q}$ by some hyperplane. Suppose it were: invoke the well-known fact that any closed, convex set in $\mathbb{R}^{\Gamma}$ is the intersection of all closed half spaces that contain it, and note that this fact together with the supposition yields that $e$ both is and is not in $\overline{\mathrm{co}}(\mathcal{Q}|_{\Gamma})$! On the other hand: if $e$ is not in $\overline{\mathrm{co}}(\mathcal{Q}|_\Gamma)$, then by the aforementioned theorem, there \emph{is} such a strongly separating hyperplane. Thus, we immediately get the following characterization of strong `Dutch books': $e$ does not admit a strong `Dutch book' if and only if every finite restriction $e|_{\Gamma}$ lies in the closed convex hull of $\mathcal{Q}|_{\Gamma}$.\footnote{I've cleaned up some of de Finetti's terminology here. He writes: `Geometrically, a point $e$ represents a [sufficiently] coherent expectation if and only if there exists no hyperplane separating it from the set $\mathcal{Q}$ of possible points; this characterizes the points of the convex hull', where this is meant to be demonstrated for the case where $K$ is finite (\citeyear[p. 76]{deFinetti-ToP:2017}). By `convex hull' de Finetti means closed convex hull, and by `separating' de Finetti means strongly separating.}

De Finetti goes on to note that similar considerations reveal the equivalence of sufficient coherence and not admitting a strong `Dutch book'. And so the above topological characterization singles out the sufficiently coherent estimate functions.
\begin{claim}[\citealt{deFinetti-ToP:1974}]
An estimate function $e$ is sufficiently coherent if and only if it does not admit a strong `Dutch book'---that is, if and only if for every finite subset $\Gamma$ of $K$, $e|_\Gamma \in \overline{\mathrm{co}}(\mathcal{Q}|_\Gamma)$.
\end{claim}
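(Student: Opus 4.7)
The plan is to route both equivalences in the claim through the topological condition $e|_\Gamma \in \wco(\mathcal{Q}|_\Gamma)$ for every finite $\Gamma \subseteq K$. The equivalence of this condition with the absence of a strong `Dutch book' has effectively been established in the discussion preceding the claim: a strong `Dutch book' over some finite $\Gamma$ is exactly a strongly separating hyperplane between $e|_\Gamma$ and $\mathcal{Q}|_\Gamma$ in $\mathbb{R}^\Gamma$, and since $\wco(\mathcal{Q}|_\Gamma)$ equals the intersection of all closed half-spaces containing $\mathcal{Q}|_\Gamma$, such a hyperplane exists iff $e|_\Gamma \notin \wco(\mathcal{Q}|_\Gamma)$. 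The substantive task is therefore to show that sufficient coherence is equivalent to the same topological condition.

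For the direction that the topological condition implies sufficient coherence, I argue the contrapositive: suppose $e|_\Gamma \notin \wco(\mathcal{Q}|_\Gamma)$ for some finite $\Gamma$. Since $\wco(\mathcal{Q}|_\Gamma)$ is a nonempty closed convex subset of the finite-dimensional Hilbert space $\mathbb{R}^\Gamma$, the Hilbert projection theorem furnishes a unique nearest point $e^*|_\Gamma \in \wco(\mathcal{Q}|_\Gamma)$ to $e|_\Gamma$, and the variational inequality for this projection yields
\[
\|y - e|_\Gamma\|^2 \;\geq\; \|y - e^*|_\Gamma\|^2 + \|e^*|_\Gamma - e|_\Gamma\|^2
\]
for every $y \in \wco(\mathcal{Q}|_\Gamma)$. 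Specializing to $y = q|_\Gamma$ with $q \in \mathcal{Q}$ produces the strict inequality $d(q|_\Gamma, e|_\Gamma) > d(q|_\Gamma, e^*|_\Gamma)$. Extending $e^*|_\Gamma$ arbitrarily to an element $e^* \in \mathbb{R}^K$ then witnesses the failure of sufficient coherence.

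For the converse, I again argue by contrapositive: if $e$ is accuracy-dominated on some finite $\Gamma$ by some $e^* \in \mathbb{R}^K$, I want to show $e|_\Gamma \notin \wco(\mathcal{Q}|_\Gamma)$. Squaring the strict inequality $d(q|_\Gamma, e|_\Gamma) > d(q|_\Gamma, e^*|_\Gamma)$ and cancelling the common term $\|q|_\Gamma\|^2$ converts the metric condition into the \emph{linear} condition
\[
\langle q|_\Gamma,\, s \rangle \;>\; c \qquad \text{for all } q \in \mathcal{Q},
\]
where $s := e^*|_\Gamma - e|_\Gamma \in \mathbb{R}^\Gamma$ and $c := \tfrac{1}{2}(\|e^*|_\Gamma\|^2 - \|e|_\Gamma\|^2)$. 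Because $\langle \cdot, s\rangle$ is continuous and linear, the weak inequality $\langle y, s\rangle \geq c$ extends to every $y \in \wco(\mathcal{Q}|_\Gamma)$. A direct algebraic computation gives $\langle e|_\Gamma, s\rangle - c = -\tfrac{1}{2}\|s\|^2$, and the hypothesis forces $e|_\Gamma \neq e^*|_\Gamma$ (otherwise the two distances would coincide), so $s \neq 0$ and $\langle e|_\Gamma, s\rangle < c$. Hence $e|_\Gamma \notin \wco(\mathcal{Q}|_\Gamma)$, as required.

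The main obstacle is translating between the quadratic, metric notion of accuracy-dominance and the linear, convex-geometric notion underlying strong `Dutch books' and hyperplane separation. The two key maneuvers are squaring the Euclidean distance inequality so that a cross-term produces the separating linear functional, and invoking the nearest-point projection in the reverse direction to manufacture a concrete dominator $e^*$ from any point lying outside the closed convex hull; once both are in hand, the equivalences follow in a uniform way on each finite-dimensional slice.
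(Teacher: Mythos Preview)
Your proposal is correct and follows essentially the same approach as the paper: both directions hinge on the perpendicular bisecting hyperplane of the segment $[e|_\Gamma, e^*|_\Gamma]$, which is precisely the linear functional $\langle \cdot, s\rangle = c$ you compute by squaring the distance inequality. The only minor difference is in constructing a dominator when $e|_\Gamma \notin \wco(\mathcal{Q}|_\Gamma)$: the paper first invokes the strong separation theorem and then \emph{reflects} $e|_\Gamma$ across the separating hyperplane to obtain $e^*|_\Gamma$, whereas you take $e^*|_\Gamma$ to be the Hilbert \emph{projection} onto $\wco(\mathcal{Q}|_\Gamma)$ and read off domination from the variational inequality. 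Your route is marginally more self-contained (it does not presuppose the separation theorem already discussed before the claim) and yields the sharper Pythagorean-type bound, while the paper's reflection keeps the argument symmetric with the first direction; but these are cosmetic variations on the same geometric idea.
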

\begin{proof}
First, suppose that $e$ is not sufficiently coherent because $e^*$ is closer to all points in $\mathcal{Q}|_\Gamma$, and then note that $e|_\Gamma$ cannot be in $\overline{\mathrm{co}}(\mathcal{Q}|_{\Gamma})$. Suppose it were: consider the bisecting normal hyperplane, $[d(e|_\Gamma,~\cdot~) -d(e^*|_\Gamma,~\cdot~)= 0]$. Note $e^*|_{\Gamma}$ is closer (in Euclidean distance) to $q|_\Gamma$ than $e|_{\Gamma}$ if and only if $q|_\Gamma$ lies in the open half-space containing $e^*|_{\Gamma}$. But since this is the case, the closed half space containing $e^*|_{\Gamma}$ also contains $\overline{\mathrm{co}}(\mathcal{Q}|_{\Gamma})$, and so $e|_\Gamma$ is \emph{not} in that set!

Second, suppose that $e$ \emph{is} sufficiently coherent, and note that $e|_\Gamma$ \emph{must} be in $\overline{\mathrm{co}}(\mathcal{Q}|_{\Gamma})$. Suppose it weren't: then there is a hyperplane $[f=\alpha]$ strongly separating it from this set. Without loss of generality, let $[f=\alpha]$ be closer to $e|_\Gamma$ than  $[f=\alpha+\epsilon]$, and let $e^*|_\Gamma$ be the point such that $[f=\alpha]$ is the normal hyperplane bisecting the line segment joining $e|_\Gamma$ and $e^*|_\Gamma$: clearly, the latter point is closer to all points $q|_\Gamma$.

So an expectation is sufficiently coherent just in case it lies in the closed convex hull of possible points over every finite set of variables---and so just in case it does not admit a strong `Dutch book'.\footnote{A version of this argument appears in \cite[\S3.4]{deFinetti-ToP:1974,deFinetti-ToP:2017}; its formulation as a non-pragmatic argument for classical probabilism is due to \cite{Joyce1998}. \cite{Williams2012,Williams2012a} offers a similar argument (and illuminating illustrations) for the case where it is assumed that the set of possible points $\mathcal{Q}$ is finite (Williams refers to this set as the `set of worlds').}
\end{proof}{}
It turns out that this finite-dimensional characterization extends to a topological characterization in $\mathbb{R}^K$, for $K$ possibly uncountably infinite (and where $\mathbb{R}^K$ is given the usual topology of pointwise convergence). This extension hinges on the fact that $e$ lies in the closed convex hull of $\mathcal{Q}$ if and only if each of its finite restrictions lies in the closed convex hull of the appropriate finite restriction of $\mathcal{Q}$. (I relegate the proof of this fact to the Appendix.)
\begin{prop}\label{prop:finite_res}
Let $A$ be a nonempty subset of $\mathbb{R}^K$. For $e$ a function in this space, $e\in\overline{\mathrm{co}}(A)$ if and only if $e|_{\Gamma}  \in \overline{\mathrm{co}}(A|_{\Gamma})$ for all finite $\Gamma\subseteq K$.
\end{prop}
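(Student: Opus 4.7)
The plan is to prove both directions by exploiting the fact that $\mathbb{R}^K$ carries the product topology, so for each finite $\Gamma\subseteq K$ the restriction (projection) map $\pi_\Gamma:\mathbb{R}^K\to\mathbb{R}^\Gamma$, $f\mapsto f|_\Gamma$, is continuous and linear, and sets of the form $\pi_\Gamma^{-1}(V)$ (with $V\subseteq\mathbb{R}^\Gamma$ open) form a neighborhood basis at each point. The key algebraic observation that feeds both directions is that projection commutes with convex combinations: $\pi_\Gamma(\mathrm{co}(A)) = \mathrm{co}(A|_\Gamma)$, since $\pi_\Gamma\bigl(\sum_i \lambda_i a_i\bigr) = \sum_i \lambda_i\, a_i|_\Gamma$ whenever $a_i\in A$, $\lambda_i\geq 0$, $\sum_i \lambda_i = 1$.

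For the forward direction, suppose $e\in\overline{\mathrm{co}}(A)$ and fix a finite $\Gamma\subseteq K$. Because $\pi_\Gamma$ is continuous, $\pi_\Gamma(\overline{\mathrm{co}}(A)) \subseteq \overline{\pi_\Gamma(\mathrm{co}(A))}$; combined with the observation above, this gives $e|_\Gamma = \pi_\Gamma(e) \in \overline{\mathrm{co}(A|_\Gamma)} = \overline{\mathrm{co}}(A|_\Gamma)$.

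For the reverse direction, suppose $e|_\Gamma\in\overline{\mathrm{co}}(A|_\Gamma)$ for every finite $\Gamma\subseteq K$. To show $e\in\overline{\mathrm{co}}(A)$, it suffices to show that every basic open neighborhood of $e$ meets $\mathrm{co}(A)$. Such a neighborhood has the form $U = \pi_\Gamma^{-1}(V)$ for some finite $\Gamma\subseteq K$ and some open $V\subseteq\mathbb{R}^\Gamma$ containing $e|_\Gamma$. Since $e|_\Gamma$ lies in the closure of $\mathrm{co}(A|_\Gamma)$, there exists $y\in V\cap\mathrm{co}(A|_\Gamma)$. By the algebraic observation above, $\mathrm{co}(A|_\Gamma) = \mathrm{co}(A)|_\Gamma$, so we may pick $f\in\mathrm{co}(A)$ with $f|_\Gamma = y\in V$; hence $f\in U\cap\mathrm{co}(A)$, as required.

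There is no real obstacle here — the main care is in keeping the two uses of ``closed convex hull'' (one in $\mathbb{R}^K$ with the product topology, one in the finite-dimensional Euclidean space $\mathbb{R}^\Gamma$) straight and in verifying the commutation of projection with convex combinations. The only mildly subtle point is that one must not try to argue by directly projecting a candidate approximant from $\mathrm{co}(A)\subseteq\mathbb{R}^K$ — instead, one lifts an approximant from $\mathrm{co}(A|_\Gamma)$ back to $\mathrm{co}(A)$ using the surjectivity of $\pi_\Gamma$ on the convex hulls, which is what makes the product-topology characterization work without any compactness or metrizability assumption on $\mathbb{R}^K$.
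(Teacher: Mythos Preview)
Your proof is correct and considerably cleaner than the paper's. The paper first reduces to the slightly stronger claim that $e\in\overline{A}$ iff $e|_\Gamma\in\overline{A|_\Gamma}$ for all finite $\Gamma$ (applied with $A$ replaced by $\mathrm{co}(A)$), and then proves the reverse direction by an explicit net construction: it splits into two cases according to whether every $e|_\Gamma$ already lies in $A|_\Gamma$ or not, and in the second case builds a custom directed set of pairs $(\Gamma,d)$ with a hand-crafted preorder to produce a net in $A$ converging pointwise to $e$. Your argument bypasses all of this by using the neighborhood-basis description of the product topology and the surjectivity of $\pi_\Gamma$ from $\mathrm{co}(A)$ onto $\mathrm{co}(A|_\Gamma)$ to lift an approximant; this is both shorter and conceptually more transparent, and it works just as well for the stronger claim about arbitrary $A$ (simply drop the convex-hull operator throughout). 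The paper's approach buys nothing extra---it is just a more laborious verification of the same fact---whereas yours isolates exactly the two ingredients that matter: continuity of the projections and the fact that cylinder sets form a neighborhood basis.
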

\noindent
An immediate consequence of this proposition and the above claim is that $e$ is sufficiently coherent if and only if it lies in the closed convex hull of $\mathcal{Q}$ in $\mathbb{R}^{K}$.
\begin{prop}\label{prop:suff_res}
An estimate function $e\in \mathbb{R}^{K}$ is sufficiently coherent if and only if $e \in \overline{\mathrm{co}}(\mathcal{Q})$.
\end{prop}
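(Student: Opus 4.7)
The plan is to combine the two results already established immediately above in the excerpt: the Claim (attributed to de Finetti) and Proposition \ref{prop:finite_res}. The Claim states that $e$ is sufficiently coherent if and only if $e|_\Gamma \in \overline{\mathrm{co}}(\mathcal{Q}|_\Gamma)$ for every finite subset $\Gamma$ of $K$. Proposition \ref{prop:finite_res}, specialized to $A = \mathcal{Q}$, states that $e \in \overline{\mathrm{co}}(\mathcal{Q})$ (taken in $\mathbb{R}^K$ with the product topology) if and only if $e|_\Gamma \in \overline{\mathrm{co}}(\mathcal{Q}|_\Gamma)$ for every finite $\Gamma \subseteq K$. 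Chaining these two biconditionals delivers the desired equivalence.

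Concretely, I would proceed in two short steps. First, invoke the Claim to rewrite ``$e$ is sufficiently coherent'' as the finite-slice condition ``for every finite $\Gamma \subseteq K$, $e|_\Gamma$ lies in the closed convex hull of $\mathcal{Q}|_\Gamma$.'' Second, apply Proposition \ref{prop:finite_res} with $A = \mathcal{Q}$ to collapse this family of finite-dimensional membership conditions into the single global condition $e \in \overline{\mathrm{co}}(\mathcal{Q})$ in the topological vector space $\mathbb{R}^K$.

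The only minor caveat is that Proposition \ref{prop:finite_res} is stated for nonempty $A$; if $\mathcal{Q} = \emptyset$, both sides of the biconditional hold vacuously in the appropriate (trivial) sense, since $\overline{\mathrm{co}}(\emptyset) = \emptyset$ and no $e$ can avoid accuracy-dominance when there are no possible points to approximate. I would dispatch this in a single sentence.

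There is no real obstacle in this proof: the substantive content has already been packaged into the Claim (which does the work of equating sufficient coherence with a closed-convex-hull membership condition on finite restrictions via strongly separating hyperplanes) and into Proposition \ref{prop:finite_res} (which is what lets us pass from the finite-dimensional slices to the full infinite product $\mathbb{R}^K$, and whose proof is deferred to the Appendix). Proposition \ref{prop:suff_res} is therefore best presented as a one-line corollary of chaining these two results.
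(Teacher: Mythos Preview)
Your proposal is correct and is exactly the approach the paper takes: the paper states the proposition as ``an immediate consequence of this proposition [Proposition~\ref{prop:finite_res}] and the above claim,'' without giving a separate proof. Your handling of the degenerate $\mathcal{Q}=\emptyset$ case is a minor addition the paper does not bother with, but it is harmless.
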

\noindent
This characterization of sufficient coherence allows for a straightforward defense of probabilism for stochastic theories.

\subsection{Using sufficient coherence to defend probabilism for stochastic theories}\label{sec:coherence_quantum}

In this section, I apply the foregoing characterization of sufficient coherence to the algebraic states defined in the $C^*$-algebraic approach to stochastic theories, for several different stipulations of possible assignments of chance values. The algebraic approach affords a high degree of generality: in most quantum and classical theories, measurement events are (or can be) represented as elements in some $C^*$-algebra or another. So to start, I will briefly review this approach.

Recall that a (concrete) $C^*$-algebra $\mathfrak{A}$ is a self-adjoint, unital, and norm-closed subalgebra of the algebra $\mathcal{B(H)}$ of bounded operators on a Hilbert space. The effects $\mathcal{E(\mathfrak{A})}$ are the positive elements of $\mathfrak{A}$ that are dominated by the identity. A subset of these, the projections $\mathcal{P(\mathfrak{A})}$, are the (positive) idempotent elements of the algebra. Any effect in a $C^*$-algebra may be interpreted as a  measurement event. I will not review this interpretation in depth here; for more details, see \cite{Busch1995}. Suffice to say that, in general, an effect may be associated with observing a specific outcome for an unsharp measurement (a POVM); as a special case, a projection may be associated with observing a specific outcome for a sharp measurement (a PVM). I will refer to the set of effects (or the set of projections) as \emph{a set of measurement events}:
\begin{quote}
    \textbf{Measurement events.} $K$ is \emph{a set of measurement events} if it is either the set $\mathcal{P}(\mathfrak{A})$ or the set $\mathcal{E}(\mathfrak{A})$ for some $C^*$-algebra $\mathfrak{A}$. If the $C^*$-algebra is defined on a finite-dimensional Hilbert space, then $K$ is \emph{a set of finite measurement events}.
\end{quote}

A \emph{state} $\omega$ on a $C^*$-algebra $\mathfrak{A}$ is a positive and normalized linear functional on $\mathfrak{A}$; I refer to states as `algebraic states' when I wish to flag their association with a $C^*$-algebra. By their positivity, algebraic states give effects values in the unit interval, and by their linearity, they are finitely additive over the effects (although they need not be countably or completely additive over them). An extremal element of the convex set $\mathcal{S}$ of states---that is, a state that cannot be written as a non-trivial convex combination of at least two different states---is called a \emph{pure state}. The set $\mathcal{S}$ is often called the \emph{state space} when it is treated as a subset of the dual $\mathfrak{A}^*$ of the algebra equipped with the weak$^*$ topology---this is the topology of pointwise-convergence for positive linear functionals on $\mathfrak{A}$. This approach is quite powerful, as the state space $\mathcal{S}$ turns out to be compact in this topology (c.f. \citealt[Theorem 2.3.15]{BratteliRobinson:vol1}).

In order to recover the textbook orthodoxy for density operators qua proper mixtures, we will also want to define vector states and normal states. A \emph{vector state} $\omega_x$ is a state such that $\omega_x (~\cdot~) = \langle x, ~\cdot~ x \rangle$ for some normalized $x\in\mathcal{H}$. A \emph{normal state} $\omega_{\rho}$ is a state such that $\omega_{\rho}(~\cdot~) = \mathrm{Tr}(\rho ~\cdot~)$ for some density operator $\rho \in\mathcal{B(H)}$ (and so it is some countable convex combination of vector states). I will note that normal states are typically defined in the framework of von Neumann algebras---that is, $C^*$-algebras that are also closed in the weak-operator topology---as the states on such algebras that have a certain weak-operator continuity property. But for concrete von Neumann algebras, this latter definition of normal states coincides with my own (c.f. \citealt[Theorem 7.1.12]{KadisonRingrose:vol2}).

That is all the background we will need. I will now introduce a couple of short-hand descriptions of points in $\mathbb{R}^K$ to help state the main results.
\begin{quote}
    \textbf{Points given by states on $C^*$-algebras.} Let a point in $\mathbb{R}^K$ be \emph{pure}, \emph{vectorial}, \emph{normal}, or \emph{algebraic} if it is the restriction of a pure state, a vector state, a normal state, or a state on $\mathfrak{A}$ to $K$, respectively.
\end{quote}
Textbook orthodoxy for density operators qua proper mixtures illustrates one case where vectorial points ought to be treated as the possible assignments of chance values---these points encode the chances prescribed by wavefunctions. Other foundational or interpretive approaches to quantum theories may take one of the other three sets to yield the possible assignments of chance values. For example, one may use the `transcendental' argument of \cite{Arageorgis2017} to motivate taking normal states to yield the possible points: one part of this argument suggests that, since we think we can reliably prepare normal states in the lab, and since neither Sch\"{o}dinger evolution nor von Neuman projection (i.e. the usual ways of describing the evolution of states) can take a non-normal state to a normal state, we should think that only normal states correspond to possible assignments of chance values.\footnote{See also Chen's (\citeyear{Chen2017}) discussion of a variety of approaches taking density operators to be fundamental, including a such an approach to the Ghirardi-Rimini-Weber collapse theory.}

% (i) \textit{Pure points.} `Hidden variables' in the abelian case. Hans, Reutsche, 0,1 valued points.

% (ii) \textit{Vectorial points.} Schrodinger dynamics.

% (iii) \textit{Normal points.} Von Neumann dynamics.

% (iv) \textit{Quantum points.} $C^*$-dynamical system.

So long as you choose one of the above four sets of points to be the set of possible assignments of chance values, then the application of sufficient coherence is straightforward. We need only a basic property of the state space $\mathcal{S}$---namely, that this space is the closed convex hull of the set of pure states, the set of vector states, the set of normal states, and the set of states. But note that because we have restricted our attention to the set of measurement events, a \emph{subset} of the set of elements of a $C^*$-algebra, this nice property of the state space is not immediately applicable to estimates. The compactness of the state space is crucial for showing that an analogous property holds for the set of algebraic points in $\mathbb{R}^K$. That work done, Proposition \ref{prop:suff_res} suffices to show the following:

\begin{theorem}\label{thm:gen}
Let $K$ be a set of measurement events. Suppose the set $\mathcal{Q} \subseteq \mathbb{R}^K$ of possible points is one of:
\begin{itemize}
    \item[(i)] the set of pure points;
    \item[(ii)] the set of vectorial points;
    \item[(iii)] the set of normal points; or
    \item[(iv)] the set of algebraic points.
\end{itemize}
Then an estimate function is sufficiently coherent iff it is algebraic.
\end{theorem}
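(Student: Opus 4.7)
The plan is to reduce the claim, via Proposition \ref{prop:suff_res}, to showing that the closed convex hull $\overline{\mathrm{co}}(\mathcal{Q})$ of the possible points in $\mathbb{R}^K$ coincides with the set $\mathcal{A}$ of algebraic points in $\mathbb{R}^K$. In each of cases (i)--(iv), the set $\mathcal{Q}$ is a subset of $\mathcal{A}$ (pure, vector, and normal states are all states), so the task splits into (a) showing $\mathcal{A}$ is closed and convex, which yields $\overline{\mathrm{co}}(\mathcal{Q}) \subseteq \mathcal{A}$; and (b) establishing the reverse inclusion $\mathcal{A} \subseteq \overline{\mathrm{co}}(\mathcal{Q})$ in each case.

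Step (a) is a compactness argument. Convexity is immediate, as convex combinations of states are states. For closedness, consider the restriction map $\pi: \mathcal{S} \to \mathbb{R}^K$ sending $\omega \mapsto \omega|_K$. Because $K \subseteq \mathfrak{A}$, weak$^*$ convergence on $\mathcal{S}$ entails pointwise convergence on $K$, so $\pi$ is continuous from $\mathcal{S}$ (with the weak$^*$ topology) to $\mathbb{R}^K$ (with its product topology). Since $\mathcal{S}$ is weak$^*$-compact, the image $\pi(\mathcal{S}) = \mathcal{A}$ is compact, hence closed.

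Step (b) is the heart of the argument. Case (iv) is trivial. For cases (i)--(iii), I would combine Proposition \ref{prop:finite_res} with the strong hyperplane separation theorem. Suppose for contradiction that some algebraic $e = \omega|_K$ fails to lie in $\overline{\mathrm{co}}(\mathcal{Q})$; by Proposition \ref{prop:finite_res}, there is a finite $\Gamma \subseteq K$ with $e|_\Gamma \notin \overline{\mathrm{co}}(\mathcal{Q}|_\Gamma)$, and strong separation in $\mathbb{R}^\Gamma$ supplies real coefficients $(s_k)_{k \in \Gamma}$ and $\alpha \in \mathbb{R}$ with
\begin{equation*}
    \sum_{k \in \Gamma} s_k\, \omega(k) \;>\; \alpha \;\geq\; \sum_{k \in \Gamma} s_k\, q(k) \quad \text{for every } q \in \mathcal{Q}
\end{equation*}
(swapping the two sides if necessary). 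Setting $B := \sum_{k \in \Gamma} s_k\, k \in \mathfrak{A}$, which is self-adjoint because each effect or projection $k$ is self-adjoint and the $s_k$ are real, this reads $\omega(B) > \alpha \geq \omega'(B)$ for every pure, vector, or normal state $\omega'$ indexing a point in $\mathcal{Q}$. A contradiction then follows from the identity $\sup\{\omega'(B) : \omega' \in \mathcal{Q}\text{-states}\} = \sup\sigma(B)$ (the spectrum of $B$) together with $\omega(B) \leq \sup\sigma(B)$, which holds for every state $\omega$ since $B \leq \sup\sigma(B) \cdot I$ in $\mathfrak{A}$.

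The main obstacle is justifying that supremum identity in each of cases (i)--(iii). For vector states it reduces to the standard Hilbert-space identity $\sup_{\|x\|=1}\langle x, Bx\rangle = \sup\sigma(B)$ for self-adjoint $B$; case (iii) follows because normal states include all vector states, so their supremum can only be at least as large (and is bounded above by $\sup\sigma(B)$ for the same reason as any state); and for pure states one may either invoke the $C^*$-algebraic fact that every spectral value of a self-adjoint element is attained by some pure state, or equivalently apply Krein--Milman directly, noting that the weak$^*$-continuous linear functional $\omega' \mapsto \omega'(B)$ on the compact convex set $\mathcal{S}$ attains its supremum at an extreme point, i.e.\ at a pure state. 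Once that identity is in hand, everything else is bookkeeping with Proposition \ref{prop:finite_res} and the finite-dimensional separation theorem.
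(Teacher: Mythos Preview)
Your argument is correct, and in outline it agrees with the paper: both use Proposition~\ref{prop:suff_res} to reduce to identifying $\overline{\mathrm{co}}(\mathcal{Q})$ with the set of algebraic points, and both use the weak$^*$-compactness of $\mathcal{S}$ to handle the passage from $\mathfrak{A}^*$ down to $\mathbb{R}^K$.

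The genuine difference is in step~(b). The paper works entirely on the dual side: it invokes known results that $\mathcal{S}=\overline{\mathrm{co}}(\mathcal{Q}_\omega)$ in the weak$^*$ topology for each of the four choices of $\mathcal{Q}_\omega$ (Krein--Milman for pure states; a Kadison--Ringrose density result for vector states; the normal case sandwiched between these; the algebraic case trivial), and then observes that compactness of $\mathcal{S}$ makes the restriction map send $\overline{\mathrm{co}}(\mathcal{Q}_\omega)$ onto $\overline{\mathrm{co}}(\mathcal{Q})$ in $\mathbb{R}^K$. You instead stay in $\mathbb{R}^K$, pull a would-be counterexample back through Proposition~\ref{prop:finite_res} to a finite $\Gamma$, separate there, assemble the self-adjoint element $B=\sum_{k\in\Gamma}s_k k$, and derive a contradiction from $\omega(B)>\sup\sigma(B)$ via the identity $\sup_{\omega'}\omega'(B)=\sup\sigma(B)$ over the relevant class of states. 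In effect you are reproving the weak$^*$-density results by hand through the separation/spectral route (which is indeed one standard way to prove them). The paper's proof is shorter because it delegates to the literature; yours is more self-contained and makes the role of the spectral order explicit, at the cost of having to justify the supremum identity separately in each case.
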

\begin{proof}
Let $\mathfrak{A}$ be the concrete $C^*$-algebra associated with $K$, let $\mathcal{S}$ be the set of states in $\mathfrak{A}^*$ with the weak$^*$ topology. Let $\mathcal{Q}_\omega \subseteq \mathcal{S}$ be the set of states that restrict to possible points; i.e. $\mathcal{Q}_\omega|_K=\mathcal{Q}$. Given any of our four suppositions,
\begin{equation}\label{eq:closedhull}
\mathcal{S} = \overline{\mathrm{co}}(\mathcal{Q}_\omega).
\end{equation}
If (i) $\mathcal{Q}_\omega$ is the set of pure states: equation (\ref{eq:closedhull}) follows from, e.g., Theorem 2.3.15 of \cite{BratteliRobinson:vol1}. If (ii) $\mathcal{Q}_\omega$ is the set of vector states: equation (\ref{eq:closedhull}) follows from, e.g., Corollary 4.3.10 of \cite{KadisonRingrose:vol1}, since $\mathcal{A}$ is a unital, self-adjoint subspace of some $\mathcal{B(H)}$. If (iii) $\mathcal{Q}_\omega$ is the set of normal states: now letting $\mathcal{V}$ be the vector states, $\mathrm{co}(\mathcal{V})\subseteq\mathrm{co}(\mathcal{Q}_\omega)=\mathcal{Q}_\omega\subseteq \overline{\mathrm{co}}(\mathcal{V})=\mathcal{S}$ and so equation (\ref{eq:closedhull}) follows from the previous. If (iv) $\mathcal{Q}_\omega$ is the set of states: equation (\ref{eq:closedhull}) is trivial.

Now note that $\mathcal{S}|_K = \overline{\mathrm{co}}(\mathcal{Q}_\omega)|_K = \overline{\mathrm{co}} (\mathcal{Q})$; the latter equality follows directly from the compactness of $\mathcal{S}$ (and the fact that $\mathbb{R}^K$ has the topology of pointwise convergence). 
%(If $\omega \in \overline{\mathrm{co}}(\mathcal{V})$, then (trivially) $\omega|_{K} \in \overline{\mathrm{co}}(\mathcal{Q})$. Now suppose there is a pointwise-convergent net $n_K$ in $\mathrm{co}(\mathcal{Q})$; this extends in the natural way to some net $n$ in $\mathrm{co}(\mathcal{Q}^*)$. Note that $n$ has a subnet $n^*$ that converges to some $\omega\in\mathcal{S} = \overline{\mathrm{co}}(\mathcal{Q}^*)$, and note $n^*$ induces a subent $n^*_K$ of $n_K$ that converges to $\omega|_K$; thus, $n_K$ converges to $\omega|_K$ as well.)
Proposition \ref{prop:suff_res} completes the proof.
\end{proof}
\noindent
Assuming you agree with the functionalist's answer (c.f. Section \ref{sec:prelim_coherence}), then Theorem \ref{thm:gen} yields a defense of probabilism for stochastic theories on four different assumptions regarding which states describe possible assignments of chance values. Explicitly:
\begin{quote}
    \textbf{Probabilism for stochastic theories (restated).} Suppose that (i) pure states, (ii) vector states, (iii) normal states, or (iv) states on a given $C^*$-algebra yield the possible assignments of chance values. Then estimates for chances are rationally coherent if and only if they are given by a state on that algebra.
\end{quote}
When possible assignments of chance values are given by (i) pure states, the thesis vindicates textbook orthodoxy for density operators qua proper mixtures in finite dimensions (as all states are normal in this case).

I conclude this section with a brief consideration of some technical and philosophical implications of Theorem \ref{thm:gen} for the usual quantum logics of effects and of projections. These implications are straightforward, and they demonstrate links between Theorem \ref{thm:gen} and algebraic versions of Gleason's theorem. I will consider such versions of Gleason's theorem for effects and for projections in turn.

First, let $K$ be the set of effects in some $\mathcal{B(H)}$. Busch's (\citeyear{Busch2003}) analog of Gleason's theorem for POVMs implies that the \emph{Kolmogorovian} functions in $\mathbb{R}^K$---that is, the finitely additive functions valued in the unit interval and assigning one to the identity---are all and only the restrictions of states on $\mathcal{B(H)}$ to $K$ (for $\mathcal{H}$ of \emph{any} countable dimension). Thus, if the possible assignments of chance values are given by any of (i)--(iv), then an estimate function is sufficiently coherent if and only if it is Kolmogorovian.
% Busch notes that an immediate consequence of his result is an analog of the Kochen-Specker theorem for effects: there cannot be an additive assignment of values in $\{0,1\}$ to the effects, and so there cannot be a noncontextual, deterministic hidden variable theory.
% This result affords four distinct ways of fleshing out Busch's proposed propensity approach to quantum probabilities.
% Moreover: suppose that (ii) vector states give the possible assignments of chance values, and interpret these assignments as `truth valuations'. Now, the `No Drop' condition---a many-valued variant of the Tarskian definition of entailment (\citealt{Pettigrew2016})---recovers the usual poset ordering of effects, affording a clear physical semantics for effect algebras.

Second, let $K$ be the set of projections in some von Neumann algebra. For \emph{most} von Neumann algebras, the Kolmogorovian functions in $\mathbb{R}^K$ are all and only the restrictions of states on the algebra to the projections. A sufficient condition for this result to hold is that the von Neumann algebra in question has no direct summand of type $I_2$---for an explication of this concept and a proof of the result, see \cite{Hamhalter2013}.\footnote{See also the lucid discussion of \cite{Redei2007}, who present the result as it appears here.} For all von Neumann algebras of this sort, if the possible assignments of chance values are given by any of (i)--(iv), then an estimate function is sufficiently coherent if and only if it is Kolmogorovian. Moreover: suppose that (ii) vector states yield the possible assignments of chance values, and suppose that these assignments correspond to many-valued `truth valuations'. Now, the standard Tarskian definition of entailment recovers the usual ordering of projections, providing a clear physical semantics for projections qua either an ordered partial algebra or a Hilbert lattice.\footnote{Explicitly, take a possible assignment of chance values $q$ to be a truth valuation in the sense that $q(a)=0$ means $a$ is false, $q(a)=1$ means $a$ is true, and $q(a)\in (0,1)$ means $a$ has some degree of truth. The standard Tarskian definition of entailment says that $a$ implies $b$ just in case $q(a)=1$ implies $q(b)=1$ for all $q\in\mathcal{Q}$.}

We have seen that sufficient coherence affords several defenses of probabilism for stochastic theories, and these defenses have a wide range of applications. But the reader partial to the traditional `Dutch book' argument may wonder if we \emph{need} sufficient coherence (or strong `Dutch books') to carry out these defenses. Can we get by with full coherence alone? Occasionally, we can. For example, full coherence can be used to defend a probabilist thesis for finite-dimensional quantum systems. But it turns out that, if we wish to defend the coherence of any finitely additive quantum probability (qua estimates for chances) in an \emph{infinite}-dimensional system, then sufficient coherence may be \emph{necessary}.

\subsection{Characterizing the difference between full coherence and sufficient coherence}\label{sec:coherence_relating}

In this section, I characterize when full coherence and sufficient coherence align, and I show that sufficient coherence is needed for one version of probabilism for stochastic theories. This section extends the work of \cite{Williams2012,Williams2012a} relating accuracy-dominance to `Dutch books' for finite sets of possible points, as well as the work of \cite{Paris2001} and \cite{Kuhr2007} on `Dutch book' theorems for non-classical and many-valued logics.

I first introduce a couple of definitions, lightly adapted from the standard notion of an exposed point in functional analysis, in order to make the characterization straightforward.\footnote{For an overview of exposed points, see \cite[Section 7.15]{Aliprantis2006}.}
\begin{quote}
\textbf{Relatively exposed point.} Let $A$ be a nonempty convex subset of $\mathbb{R}^K$. Say a point $e\not \in A$ is \emph{exposed relative to $A$} if $e$ is the unique maximizer (or minimizer) over $A\cup\{e\}$ of a non-zero linear functional.
\end{quote}
\noindent
Note that this definition is equivalent to saying that there is a hyperplane containing $e$ in a closed half space and leaving $A$ in the disjoint open half space. For $K$ finite, the definition is also equivalent to saying that there is some $s \in \mathbb{R}^K$ such that $s\cdot a < s\cdot e$ for all $a\in A$.
\begin{quote}
\textbf{Finitely convexly protected.} Let $A$ be a nonempty subset of $\mathbb{R}^K$. Say $A$ is \emph{finitely convexly protected} if, for all finite $\Gamma \subseteq K$, there is no point $b \in \overline{\mathrm{co}}(A|_{\Gamma})$ that is exposed relative to $\mathrm{co}(A|_{\Gamma})$ and that extends to some element $a\in \overline{\mathrm{co}}(A)$ (i.e. $a|_\Gamma = b$).
\end{quote}
\noindent
We now have the following:
\begin{prop}\label{prop:full_suff}
Suppose $\mathcal{Q}$ is finitely convexly protected. Then an estimate function $e\in \mathbb{R}^K$ is fully coherent if and only if $e \in \overline{\mathrm{co}}(\mathcal{Q})$.
\end{prop}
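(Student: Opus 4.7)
The plan is to proceed by proving the two directions separately, using the geometric reformulation of Dutch books as exposed points and relying on Propositions \ref{prop:finite_res} and \ref{prop:suff_res}. Before splitting, I would first unpack how admitting a Dutch book translates into the language of relatively exposed points: $e$ admits a Dutch book at some finite $\Gamma\subseteq K$ with stake $s$ precisely when $s \cdot q|_\Gamma < s \cdot e|_\Gamma$ for every $q\in\mathcal{Q}$, and by linearity this extends to all of $\mathrm{co}(\mathcal{Q}|_\Gamma)$. Hence $e|_\Gamma\notin \mathrm{co}(\mathcal{Q}|_\Gamma)$ and $s$ exhibits $e|_\Gamma$ as the unique maximizer of a non-zero linear functional over $\mathrm{co}(\mathcal{Q}|_\Gamma)\cup\{e|_\Gamma\}$; that is, $e|_\Gamma$ is exposed relative to $\mathrm{co}(\mathcal{Q}|_\Gamma)$. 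The converse direction (exposed point gives a Dutch book) is immediate from the same equation.

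For the backward direction, assume $e$ is fully coherent. Since a strong Dutch book is in particular a Dutch book (take any witnessing $\epsilon>0$ and note that $\mathcal{D}_s<-\epsilon<0$), full coherence entails sufficient coherence. Proposition \ref{prop:suff_res} then immediately gives $e\in\overline{\mathrm{co}}(\mathcal{Q})$. This direction does not use the finitely-convexly-protected hypothesis at all.

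For the forward direction, assume $e\in\overline{\mathrm{co}}(\mathcal{Q})$ and suppose toward contradiction that $e$ admits a Dutch book at some finite $\Gamma$ with stake $s$. Take $b := e|_\Gamma$. By Proposition \ref{prop:finite_res}, $b\in\overline{\mathrm{co}}(\mathcal{Q}|_\Gamma)$; by the reformulation above, $b$ is exposed relative to $\mathrm{co}(\mathcal{Q}|_\Gamma)$; and trivially $b$ extends to $e\in\overline{\mathrm{co}}(\mathcal{Q})$. All three conjuncts in the definition of finitely convexly protected are thus simultaneously witnessed at $\Gamma$, contradicting the hypothesis on $\mathcal{Q}$. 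Therefore no Dutch book exists, i.e. $e$ is fully coherent.

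The main conceptual step is the translation between the analytic condition (existence of a stake with $\mathcal{D}_s<0$ uniformly on $\mathcal{Q}$) and the geometric condition of being exposed relative to $\mathrm{co}(\mathcal{Q}|_\Gamma)$; once this is in hand, the argument is bookkeeping against the definitions. The slightly subtle point is that the exposing functional need only strictly separate $b$ from the \emph{non-closed} convex hull $\mathrm{co}(\mathcal{Q}|_\Gamma)$, which is exactly what a Dutch book provides and exactly what the definition of finitely convexly protected forbids — so the definition is calibrated precisely to make this equivalence work, and no appeal to a strict separation from the closure (which would give only sufficient coherence via Proposition \ref{prop:suff_res}) is needed.
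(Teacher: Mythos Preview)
Your proof is correct and follows essentially the same approach as the paper. The only cosmetic difference is in the direction ``fully coherent $\Rightarrow e\in\overline{\mathrm{co}}(\mathcal{Q})$'': you invoke Proposition~\ref{prop:suff_res} directly (full coherence $\Rightarrow$ sufficient coherence $\Rightarrow$ membership in the closed convex hull), whereas the paper argues the contrapositive by applying Proposition~\ref{prop:finite_res} and strong separation in $\mathbb{R}^\Gamma$ --- but since Proposition~\ref{prop:suff_res} is itself obtained from exactly that argument, the two are the same in substance.
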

\begin{proof}
Suppose that $e\in \overline{\mathrm{co}}(\mathcal{Q})$, and suppose towards a contradiction that $e$ admits a `Dutch book'. Then for some finite $\Gamma\subseteq L$, $e|_{\Gamma}$ is exposed relative to $\mathrm{co}(\mathcal{Q}|_{\Gamma})$. But then note that by Proposition \ref{prop:finite_res}, $e|_{\Gamma}\in\overline{\mathrm{co}}(\mathcal{Q}|_{\Gamma})$. So then by our supposition, $e|_{\Gamma}$ cannot be exposed relative to $\mathrm{co}(\mathcal{Q}|_{\Gamma})$.

For the converse, suppose that $e \not\in \overline{\mathrm{co}}(\mathcal{Q})$. By our supposition and Proposition \ref{prop:finite_res}, there is some finite $\Gamma\subseteq L$ such that $e|_{\Gamma} \not\in \mathrm{co}(\mathcal{Q}|_{\Gamma})$. Treat $e|_{\Gamma}$ as a point in $\mathbb{R}^{\Gamma}$ and treat $\mathrm{co}(\mathcal{Q}|_{\Gamma})$ as a closed convex set in $\mathbb{R}^{\Gamma}$. There is a hyperplane strongly separating $e|_{\Gamma}$ and $\mathrm{co}(\mathcal{Q}|_{\Gamma})$---and so $e$ admits a `Dutch book'.
\end{proof}
\noindent
This proposition is a generalization (and a much simpler proof) of the result of K\"{u}hr and Mundici (\citeyear[Theorem 2.3]{Kuhr2007}), who rely on the stronger supposition that $\mathcal{Q}|_{\Gamma}$ is closed for all finite $\Gamma\subseteq K$. Moreover: the antecedent of Proposition \ref{prop:full_suff} is also a \emph{necessary} condition for $\overline{\mathrm{co}}(\mathcal{Q})$ to contain all and only those estimate functions that are fully coherent.

\begin{prop}\label{prop:full_nec}
Suppose that $\mathcal{Q}$ is not finitely convexly protected. Then there is an estimate function $e\in \overline{\mathrm{co}}(\mathcal{Q})$ that is not fully coherent.
\end{prop}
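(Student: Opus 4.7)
The plan is to unpack the failure of finite convex protection directly into a Dutch book. Since $\mathcal{Q}$ is not finitely convexly protected, I can pick a finite $\Gamma\subseteq K$ and a point $b\in\overline{\mathrm{co}}(\mathcal{Q}|_\Gamma)$ such that (a) $b$ is exposed relative to $\mathrm{co}(\mathcal{Q}|_\Gamma)$ and (b) $b$ extends to some $a\in\overline{\mathrm{co}}(\mathcal{Q})$, i.e. $a|_\Gamma = b$. I would take $e := a$ as the witness. Then $e\in\overline{\mathrm{co}}(\mathcal{Q})$ by construction, so half of the conclusion is immediate, and the only task left is to show that $e$ admits a `Dutch book'.

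The second half falls out of exposedness. By definition, there is a non-zero linear functional on $\mathbb{R}^\Gamma$, say $c \mapsto s\cdot c$ for some $s\in\mathbb{R}^\Gamma$, such that $b$ is the unique maximizer (or minimizer — replace $s$ by $-s$ if need be) of $s\cdot(\,\cdot\,)$ over $\mathrm{co}(\mathcal{Q}|_\Gamma)\cup\{b\}$. Hence $s\cdot b > s\cdot c$ for every $c\in\mathrm{co}(\mathcal{Q}|_\Gamma)$, and in particular $s\cdot b > s\cdot q|_\Gamma$ for every $q\in\mathcal{Q}$, since $q|_\Gamma\in\mathcal{Q}|_\Gamma\subseteq\mathrm{co}(\mathcal{Q}|_\Gamma)$. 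Rearranging and using $e|_\Gamma=b$ gives $\mathcal{D}_s(q,e)=\sum_{k\in\Gamma} s(k)[q(k)-e(k)] = s\cdot q|_\Gamma - s\cdot b < 0$ for all $q\in\mathcal{Q}$. So $s$, regarded as a stake function on $\Gamma$, is a Dutch book against $e$, and $e$ fails full coherence.

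There isn't really a hard step here: Proposition \ref{prop:full_nec} is essentially the contrapositive of Proposition \ref{prop:full_suff}, and the definition of `finitely convexly protected' was crafted precisely so that its failure at $b$ corresponds to an exposing hyperplane, which in turn corresponds to a Dutch book inequality via the bijection between linear functionals on $\mathbb{R}^\Gamma$ and stake functions on $\Gamma$. The only care one must take is in distinguishing $\mathrm{co}(\mathcal{Q}|_\Gamma)$ from $\overline{\mathrm{co}}(\mathcal{Q}|_\Gamma)$ when invoking exposedness (the former is what the definition uses, which is exactly what one needs to get a \emph{strict} inequality against all $q\in\mathcal{Q}$), and in tracking the sign convention so that the exposing functional produces $\mathcal{D}_s(q,e)<0$ rather than $>0$.
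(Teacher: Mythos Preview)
Your proof is correct and follows essentially the same approach as the paper's own proof: unpack the failure of finite convex protection to obtain a finite $\Gamma$ and a relatively exposed closure point $b=e|_\Gamma$ extending to $e\in\overline{\mathrm{co}}(\mathcal{Q})$, then read off the exposing functional $s$ as a stake function witnessing a `Dutch book' against $e$. Your version is slightly more explicit about the sign convention and the computation of $\mathcal{D}_s(q,e)$, but the argument is the same.
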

\begin{proof}
Let $\Gamma$ be the finite subset of $K$ such that $\mathrm{co}(\mathcal{Q}|_{\Gamma})$ has a relatively exposed closure point $e|_{\Gamma}$ that extends to some element $e\in \overline{\mathrm{co}}(\mathcal{Q})$. Since $e|_{\Gamma}$ is a relatively exposed closure point of $\mathrm{co}(\mathcal{Q}|_{\Gamma})$, there is some $s\in\mathbb{R}^{\Gamma}$ such that $s\cdot f < s\cdot e|_{\Gamma}$ for all $f\in \mathrm{co}(\mathcal{Q}|_{\Gamma})$. Thus, $e$ admits a `Dutch book'.
%There is a net $n_{\Gamma}$ of functions in $\mathrm{co}(\mathcal{Q}|_{\Gamma})$ that converges to $f$; this net extends in the natural way to some net $n$ in $\mathrm{co}(\mathcal{Q})$. By the compactness of $[0,1]^L$, there is a subnet $n^*$ of $n$ that converges to some $p\in \overline{\mathrm{co}}(\mathcal{Q})$. Note that $n^*$ induces a net $n^*_{\Gamma}$ in $\mathrm{co}(\mathcal{Q}|_{\Gamma})$ that converges to $p|_{\Gamma}$, and $n^*_{\Gamma}$ is a subnet of $n_{\Gamma}$; thus, $f=b|_{\Gamma}$.
\end{proof}
\noindent
So the condition of being finitely convexly protected characterizes those sets of possible points $\mathcal{Q}$ that identify $\overline{\mathrm{co}}(\mathcal{Q})$ as the set of fully coherent estimate functions. Thus, we see that full coherence and sufficient coherence align precisely when $\mathcal{Q}$ is finitely convexly protected.

When full coherence and sufficient coherence are used to defend classical probabilism, they are clearly equivalent. That is: if the variables (e.g. propositions) are $\{0,1\}$-valued (e.g. letting 0 be false and 1 be true), then $\mathcal{Q}|_{\Gamma}$ is finite and so closed (and so compact) for each finite $\Gamma \subseteq K$. Thus, each $\mathrm{co}(\mathcal{Q}|_{\Gamma})$ is closed (by Carath\'{e}odory's theorem) and so $\mathcal{Q}$ is (trivially) finitely convexly protected.

But if we ask after estimates for chances, then full coherence and sufficient coherence will generally \emph{not} be equivalent, as $\mathcal{Q}$ no longer need be finitely convexly protected. In fact, recall the aforementioned textbook orthodoxy for density operators qua proper mixtures: this provides a natural example where $\mathcal{Q}$ fails to have the property.

If we assume $\mathcal{H}$ is finite-dimensional, then $\mathcal{Q}$ \emph{is} finitely convexly protected on this approach, and we \emph{can} still use full coherence to defend it. That is, sufficient coherence and full coherence align in this case:
\begin{prop}\label{prop:finite}
Let $K$ be a set of finite measurement events, and let $q\in\mathbb{R}^K$ be possible iff it is vectorial (i.e. iff it is pure). An estimate function is fully coherent iff it is algebraic (i.e. iff it is normal).
\end{prop}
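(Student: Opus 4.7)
The strategy is to combine Proposition \ref{prop:full_suff} with Theorem \ref{thm:gen}(ii). The latter (via Proposition \ref{prop:suff_res}) already tells us that an estimate function $e$ lies in $\overline{\mathrm{co}}(\mathcal{Q})$ iff it is algebraic; I need only establish that, under the finite-dimensionality hypothesis, $\mathcal{Q}$ is finitely convexly protected, so that Proposition \ref{prop:full_suff} licenses replacing ``sufficiently coherent'' with ``fully coherent'' throughout. Then a short remark handles the algebraic/normal identification.

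The key step is to observe that finite-dimensionality of $\mathcal{H}$ forces each $\mathrm{co}(\mathcal{Q}|_\Gamma)$ to be closed. The set of vector states on $\mathfrak{A}$ is the continuous image of the unit sphere of $\mathcal{H}$ under $x \mapsto \omega_x$, and in finite dimensions this sphere is compact, so the set of vector states is weak-$^*$ compact in $\mathfrak{A}^*$. Composing with the continuous restriction-and-evaluation map into $\mathbb{R}^\Gamma$ (with its product topology) shows that $\mathcal{Q}|_\Gamma$ is compact in $\mathbb{R}^\Gamma$ for each finite $\Gamma \subseteq K$. Because $\mathbb{R}^\Gamma$ is itself finite-dimensional, the convex hull of a compact set remains compact by Carath\'{e}odory's theorem, so $\mathrm{co}(\mathcal{Q}|_\Gamma) = \overline{\mathrm{co}}(\mathcal{Q}|_\Gamma)$. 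Consequently there cannot exist any $b \in \overline{\mathrm{co}}(\mathcal{Q}|_\Gamma)$ that is exposed relative to $\mathrm{co}(\mathcal{Q}|_\Gamma)$, since by the definition of a relatively exposed point any such $b$ would have to lie outside $\mathrm{co}(\mathcal{Q}|_\Gamma)$. Hence $\mathcal{Q}$ is finitely convexly protected, vacuously so.

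Concatenating the equivalences finishes the argument: $e$ is fully coherent iff $e \in \overline{\mathrm{co}}(\mathcal{Q})$ (by Proposition \ref{prop:full_suff}) iff $e$ is algebraic (by Theorem \ref{thm:gen}(ii)). For the parenthetical identification with normal states, note that every state on a finite-dimensional $\mathfrak{A}$ extends by Hahn--Banach (and positivity) to a state on the ambient $\mathcal{B}(\mathcal{H})$, which is automatically of the form $\mathrm{Tr}(\rho\,\cdot\,)$ in the finite-dimensional setting; so ``algebraic'' and ``normal'' coincide here. The main obstacle is slim: it is really just the compactness computation, and even there the only subtlety is checking that the argument goes through when $\mathfrak{A}$ is a proper subalgebra of $\mathcal{B}(\mathcal{H})$---which it does, since continuity of all the relevant maps is unaffected.
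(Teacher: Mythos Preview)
Your proof is correct and follows essentially the same route as the paper: establish that the vector states form a compact set, deduce that $\mathcal{Q}$ is finitely convexly protected (indeed, that each $\mathrm{co}(\mathcal{Q}|_\Gamma)$ is already closed), and then invoke Proposition~\ref{prop:full_suff} together with Theorem~\ref{thm:gen}(ii). The only difference is cosmetic: the paper obtains compactness of the vector states by citing that they are weak$^*$-closed in the (compact) state space of $\mathcal{B(H)}$, whereas you get it more directly as the continuous image of the compact unit sphere of the finite-dimensional $\mathcal{H}$.
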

\begin{proof}
Let $\mathfrak{A}\subseteq \mathcal{B(H)}$ be the $C^*$-algebra associated with $K$ (where $\mathcal{H}$ is finite-dimensional). We wish to show that $\mathcal{Q} = \mathcal{V}|_K$ (where $\mathcal{V}$ is the set of vector states in $\mathfrak{A}^*$) is finitely convexly protected. To do this, it suffices to establish the stronger condition that $\mathcal{Q}$ is closed.

To this end, note that the functions in $\mathcal{V}$ extend to the entire space of operators. Let $\mathcal{V}'$ be the vector states on $\mathcal{B(H)}$; $\mathcal{V}'$ is weakly$^*$ closed in the state space of $\mathcal{B(H)}$ (\citealt[4.6.18 and 4.6.67]{KadisonRingrose:vol1}); since $\mathcal{S}$ is weakly$^*$ compact, $\mathcal{V}'$ is also weakly$^*$ compact. Thus, $\mathcal{V}'|_{K} = \mathcal{V}|_{K} = \mathcal{Q}$ is closed.
\end{proof}
\noindent
Here, every algebraic state is given by some density operator. But when we pass to infinite dimensions, Theorem \ref{thm:gen} tells us that we need not require an estimate function to match some normal state. Non-normal states will do the job just fine---the resulting estimates will still be sufficiently coherent.

However, these estimates need no longer be fully coherent! To see this, it suffices to note that there exists an effect that is compact and that has a trivial kernel---by this latter fact, no vector state will assign zero to this effect. However, it is well-known that when $\mathcal{H}$ is infinite-dimensional, there are pure states on the set of bounded operators that assign zero to all the compact operators; such a state will be exposed relative to the state space, when all points are restricted to the effect in question. And this is all we need to show that $\mathcal{Q}$ is not finitely convexly protected.

\begin{prop}\label{prop:not_protected}
Let $K$ be the set of effects in $\mathfrak{A}=\mathcal{B(H)}$ for $\mathcal{H}=\ell^2$, and let $q\in\mathbb{R}^K$ be possible iff it is vectorial. There is an algebraic point that is not fully coherent (i.e. $\mathcal{Q}$ is not finitely convexly protected).
\end{prop}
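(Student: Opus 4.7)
The informal remark preceding the statement already gives the key ingredients: find a compact effect with trivial kernel and a state on $\mathcal{B}(\ell^2)$ that annihilates the compacts. The plan is to take $\Gamma$ to be a single such effect, argue that the singular state restricts to $0$ on it while all vectorial points strictly exceed $0$, and check that this supplies the relatively exposed closure point demanded by Proposition \ref{prop:full_nec}.

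First I would choose a concrete compact effect. Fix an orthonormal basis $\{e_n\}$ of $\ell^2$ and let $a := \sum_{n=1}^{\infty} n^{-1} |e_n\rangle\langle e_n|$. Then $0 \leq a \leq I$, $a$ is compact, and $\ker a = \{0\}$. Consequently, for any unit vector $x = \sum_n c_n e_n$, one has $\omega_x(a) = \sum_n n^{-1} |c_n|^2 > 0$, while $\omega_{e_n}(a) = 1/n \to 0$. Hence, writing $\Gamma = \{a\}$ and identifying $\mathbb{R}^\Gamma$ with $\mathbb{R}$, the set $\mathcal{Q}|_\Gamma$ is a subset of $(0, 1]$ whose infimum is $0$; so $\mathrm{co}(\mathcal{Q}|_\Gamma) \subseteq (0, 1]$ and $0 \in \overline{\mathrm{co}}(\mathcal{Q}|_\Gamma)\setminus \mathrm{co}(\mathcal{Q}|_\Gamma)$. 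The linear functional $s(t) = t$ on $\mathbb{R}$ attains its unique minimum over $\mathrm{co}(\mathcal{Q}|_\Gamma) \cup \{0\}$ at $0$, so $0$ is exposed relative to $\mathrm{co}(\mathcal{Q}|_\Gamma)$.

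Next I would produce a state extending $0$ on $\Gamma$ to all of $K$. The Calkin algebra $\mathcal{B}(\ell^2)/\mathcal{K}(\ell^2)$ is a nontrivial unital $C^*$-algebra, hence admits a state $\phi_0$; pulling back by the quotient map $\pi$ yields a state $\phi := \phi_0 \circ \pi$ on $\mathcal{B}(\ell^2)$ that vanishes on every compact operator, so in particular $\phi|_\Gamma = \phi(a) = 0$. Since $\phi \in \mathcal{S}$, the identity $\mathcal{S}|_K = \overline{\mathrm{co}}(\mathcal{V}|_K) = \overline{\mathrm{co}}(\mathcal{Q})$ extracted from the proof of Theorem \ref{thm:gen} guarantees that $\phi|_K \in \overline{\mathrm{co}}(\mathcal{Q})$, giving precisely the algebraic point that extends the relatively exposed closure point $0 \in \overline{\mathrm{co}}(\mathcal{Q}|_\Gamma)$. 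This verifies that $\mathcal{Q}$ is not finitely convexly protected; Proposition \ref{prop:full_nec} then supplies an algebraic point (namely $\phi|_K$ itself) that is not fully coherent.

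The only genuinely nontrivial step is producing $\phi$: it uses honestly infinite-dimensional input (nontriviality of the Calkin algebra, or equivalently a Hahn–Banach extension of the essential norm), which is exactly what fails in the finite-dimensional setting of Proposition \ref{prop:finite}. Once $\phi$ and the specific $a$ are in hand, the remaining verifications reduce to noting that a subset of $(0,1]$ cannot contain $0$ in its convex hull.
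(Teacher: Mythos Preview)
Your proof is correct and follows essentially the same approach as the paper: the same diagonal compact effect $a=E$ with eigenvalues $1/n$, the singleton $\Gamma=\{a\}$, and a state annihilating the compacts to produce the relatively exposed closure point $0$. The only cosmetic difference is that the paper cites \cite[Ex.~4.6.69]{KadisonRingrose:vol1} for a \emph{pure} state vanishing on the compacts, whereas you obtain a (not necessarily pure) state by pulling back from the Calkin algebra---either works, since the proposition only asks for an algebraic point.
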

\begin{proof}
Let $\mathcal{H} = \ell^2$ and let $E$ be the operator 
$$
E(x_i)_{i=1}^{\infty} := (x_i / i)_{i=1}^{\infty}.
$$
$E$ is compact and its kernel is trivial (\citealt[Ex. 8.1-6]{Kreyszig1978}). Moreover: $E$ is positive and dominated by the identity, so it is an effect, i.e. $E\in \mathcal{E}(\mathfrak{A})$. So: for any vector state $\omega_x$, $0 < \omega_x(E) \leq 1$. However: since $\dim (\mathcal{H}) = \infty$, there is a pure state that is 0 on the set of compact operators \cite[Ex. 4.6.69]{KadisonRingrose:vol1}; denote this state by $\omega$. Clearly, for $\Gamma = \{ E \}$, the point $\omega|_{\Gamma}$ is exposed relative to $\mathrm{co}(\mathcal{Q}|_{\Gamma})$, and so $\mathcal{Q}$ is not finitely convexly protected (and, in particular, $\omega|_K$ is not fully coherent).\footnote{As an anonymous reviewer notes, the mechanics of this proof suggest the following conjecture: when $K$ and $\mathcal{Q}$ are as described in Proposition \ref{prop:not_protected}, a sufficiently coherent estimate function is not fully coherent just in case it corresponds to a non-normal state. I leave an assessment of this conjecture for future work.} 
\end{proof}
\noindent
For $\mathcal{B(H)}$ and any other von Neumann algebra, the normal states are just those states that are completely additive. Thus, we \emph{need} sufficient coherence to defend the rational coherence of finitely-but-not-completely additive quantum probabilities qua estimates for chances.

\section{Conclusion}\label{sec:context}

I have shown how to give an accuracy-dominance defense of the rational coherence of estimates for chances that follow probabilistic rules. This defense uses a very thin view of objective probability---functionalism about chance---that posits only that chances are (at least) our best estimates for relative frequencies in repeated observations of a given sort. Functionalism is \emph{compatible} with accounts of objective probability that explain chances with natural properties, but it only \emph{requires} intersubjective agreement about chance values. Moreover, since agents can be uncertain about the sort of an observation, they can be uncertain about chances. This thin notion of chance allows for agents to estimate chance values---literally, to form estimates for best frequency estimates. And so de Finetti's accuracy-dominance constraints on rationally coherent estimates may be applied to estimates for chances.

This application yields a defense of \emph{probabilism for stochastic theories}, the thesis that estimates for chances rationally cohere if and only if they are given by a state on a $C^*$-algebra, when a special subset of states yields the possible assignments of chance values. In one case, this defense vindicates one orthodox use of density operators---viz. using them to describe uncertainty about which wavefunction describes the true state of a system. In sum, the application yields four notable results:
\begin{itemize}
    \item a topological characterization of when estimates for the values of (a possibly uncountably infinite number of) variables avoid accuracy-dominance, i.e. are \emph{sufficiently coherent} (namely, that an estimate function is sufficiently coherent just in case it lies in the \emph{closed convex hull} of $\mathcal{Q}$, the set of possible points; c.f. Proposition \ref{prop:suff_res});
    \item the result that, supposing that (i) pure states, (ii) vector states, (iii) normal states, or (iv) states on a given $C^*$-algebra yield the possible assignments of chance values, then estimates for chances are sufficiently coherent if and only if they are given by a state on that algebra (c.f. Theorem \ref{thm:gen});
    \item a characterization, for a possibly infinite set of possible points $\mathcal{Q}$, of when sufficient coherence aligns with the more familiar notion of full coherence, i.e. of avoiding `Dutch books' (namely, when $\mathcal{Q}$ is \emph{finitely convexly protected}; c.f. Propositions \ref{prop:full_suff} and \ref{prop:full_nec}); and
    \item the result that if vector states give the possible assignments of chance values, then there may be an estimate function yielded by an algebraic state that is sufficiently coherent but not fully coherent (c.f. Proposition \ref{prop:not_protected}).
\end{itemize}
Unexplored areas for further research include the relationship between `Dutch book' arguments for countable additivity and additivity conditions for states on $C^*$-algebras, as well as the relationship between diachronic `Dutch book' arguments and methods for updating estimates for chances. But if the above has given the reader a flavor of the sorts of defenses and applications possible for probabilism for stochastic theories, then that much will suffice for now.

\section*{Acknowledgements}

This work was supported in part by the National Science Foundation under Grant \#1734155.

%%%%%%%%%%%%
% Appendix %
%%%%%%%%%%%%
\appendix

\section{Towards topologically characterizing sufficient coherence in infinite dimensions}
Recall that Proposition 1 was needed for the topological characterization of sufficiently coherent expectations in $\mathbb{R}^K$ with the product topology when $K$ is infinite. The proof of this proposition is included below for completeness. In the following, $\mathrm{Fin}(K)$ refers to the set of all finite subsets of $K$, ordered by inclusion.

\begin{proof}[Proof of Proposition 1] We aim to show that, for $X$ a subset of $\mathbb{R}^K$ (where this latter set is equipped with the product topology), $e\in\overline{\mathrm{co}}(X)$ if and only if $e|_{\Gamma}  \in \overline{\mathrm{co}}(X|_{\Gamma})$ for all $\Gamma\in \mathrm{Fin}( K)$. To demonstrate this, we show the following stronger claim:
$$
\text{for } A\subseteq \mathbb{R}^K,\quad e\in\overline{A} ~ \Longleftrightarrow ~   \forall\, \Gamma\in \mathrm{Fin}( K), ~  e|_{\Gamma}  \in \overline{A|_\Gamma}.
$$
(For $A=\mathrm{co}(X)$, note that $\mathrm{co}(X|_\Gamma)=\mathrm{co}(X)|_\Gamma$, and that the closed convex hull of a set is also the closure of the convex hull of that set.) Recall that a point belongs to the closure of a set if and only if it is a limit of a net in that set. Our strategy builds nets to witness the membership of the given points in the given closures.
\begin{enumerate}
    \item{$e\in\overline{A} ~ \Longrightarrow ~   \forall\, \Gamma\in \mathrm{Fin}( K), ~  e|_{\Gamma}  \in \overline{A|_{\Gamma}}$
    
    Suppose $e\in\overline{A} $. There is a net $n : D \to A :: d \mapsto a $ (for $D$ a directed set) that converges pointwise to $e$. For every $\Gamma\in \mathrm{Fin}( K)$, the net $n_{\Gamma}(d):= n(d)|_\Gamma$ converges pointwise to $e|_\Gamma$.}
    \item{$e\in\overline{A} ~ \Longleftarrow ~   \forall\, \Gamma\in \mathrm{Fin}( K), ~  e|_{\Gamma}  \in \overline{A|_{\Gamma}}$
    
    Suppose that, for all $\Gamma\in \mathrm{Fin}(K)$, $e|_{\Gamma}  \in \overline{A|_{\Gamma}}$. We have two cases: either (i) every finite restriction of $e$ belongs to the appropriate $A|_\Gamma$, or (ii) there is at least one that does not. For each case, we define a net $n$ in $A$ that converges pointwise to $e$.
    
    \begin{itemize}
        \item[(i)]{$\forall \, \Gamma\in \mathrm{Fin}( K), ~ e|_{\Gamma}  \in A|_{\Gamma}$
        
        For the first case, note that each $e|_{\Gamma}$ extends to some (possibly non-unique) point in $A$; for each $\Gamma$, label one of these points as $a_{\Gamma}$ (so $a_\Gamma|_\Gamma =e|_\Gamma$). The net
        \begin{equation*}
        n : \mathrm{Fin}(K) \to A :: \Gamma \mapsto a_{\Gamma}
        \end{equation*}
        pointwise converges to $e$. So $e\in\overline{A}$.}
        
        \item[(ii)]{$\exists\, \Gamma\in \mathrm{Fin}( K), ~ e|_{\Gamma}  \not\in A|_{\Gamma} $
        
        For the second case, we note that there are nets converging to $e|_\Gamma$ that do not contain these limits. We use this fact specify a subset of $A$ in which every pair of elements has an upper bound, and which allows for the definition of a net that converges pointwise to $e$ in $A$.

    Let $\Gamma_G \in\mathrm{Fin}(K)$ be a set such that $e|_{\Gamma_G}  \not\in A|_{\Gamma_G}$, and let $G$ be the set of all $\Gamma\in\mathrm{Fin}(K)$ that contain $\Gamma_G$ (i.e. such that $\Gamma \supseteq \Gamma_G $). Clearly, for all $\Gamma\in G$, $e|_\Gamma \not\in A|_\Gamma$.

    Moreover, for every $\Gamma\in G$, there is a net that converges pointwise (equivalently, uniformly) to $e|_{\Gamma}$ that does not contain it; for every such $\Gamma$, label one such net $n_\Gamma^*: D_{\Gamma}\to A|_{\Gamma}$ (for $D_\Gamma$ some directed set). Now note that each element in the range of $n_{\Gamma}^*$ extends to some (possibly non-unique) element $a\in A$. So associate with each net $n_{\Gamma}^*$ a net $n_{\Gamma}: D_{\Gamma}\to A$ such that $n_{\Gamma}(d)|_\Gamma = n_{\Gamma}^*(d)$.

    I will refer to the direction of each $D_{\Gamma}$ as $\succcurlyeq_{\Gamma}$. Now, for each pair $(\Gamma,d)$, define $E_{(\Gamma,d)}$ as the following set:
    $$
    E_{(\Gamma,d)} := \{ \epsilon : \epsilon >0 \text{ and } \forall d' \succcurlyeq_{\Gamma} d, \ \mathrm{max}_{k\in \Gamma}\left | n_{\Gamma}(d') (k) - e( k) 
    \right |  < \epsilon \}.
    $$
    By $n_\Gamma^*$ uniformly convergent, for every $\epsilon>0$, there is some $d \in D_\Gamma$ such that $\epsilon \in E_{(\Gamma,d)}$.

    Now define $D:=\{(\Gamma,d) : \Gamma \in G \text{ and } d \in D_{\Gamma}\}$, and let $n$ be the follwing map:
    \begin{equation*}
    n : D\to A :: (\Gamma,d ) \mapsto n_{\Gamma}(d).
    \end{equation*}
    Define the binary relation $\preccurlyeq$ on $D$ by the condition that $(\Gamma,d) \preccurlyeq (\Gamma',d')$ if and only if $\Gamma \subseteq \Gamma'$ and $E_{(\Gamma,d)} \subseteq E_{(\Gamma',d')}$. I will now show that $\preccurlyeq$ is a direction, making $n$ a net in $A$.

    It is clear that $\preccurlyeq$ is reflexive and transitive. To see that any two $(\Gamma,d),\, (\Gamma',d')$ have an upper bound, we can assume without loss of generality that $\Gamma \subseteq \Gamma'$. Note that $E_{(\Gamma,d)}$ is bounded from below by
    $$
    \epsilon_{\mathrm{min}} =: \mathrm{max}_{k\in \Gamma} \left | n_{\Gamma}(d) (k) - e(k) \right |
    $$
    (and it does not include $\epsilon_{\mathrm{min}}$), and it must be that $\epsilon_{\mathrm{min}} > 0$ by $e|_{\Gamma}\not \in A|_{\Gamma}$. But by $n_{\Gamma'}^*$ uniformly convergent to $e|_{\Gamma'}$, there is some $\beta'$ such that for all $\beta \succcurlyeq_{\Gamma'} \beta'$, $\mathrm{max}_{k\in \Gamma'}\left | n_{\Gamma'}(\beta) (k) - e(k) \right | < \epsilon_{\mathrm{min}}$. Since $\preccurlyeq_{\Gamma'}$ is a direction, there is some $ \eta$ such that $\beta' \preccurlyeq_{\Gamma'} \eta$ and $d' \preccurlyeq_{\Gamma'} \eta$. Note that $(\Gamma,d) \preccurlyeq (\Gamma', \eta)$ and $(\Gamma',d') \preccurlyeq (\Gamma', \eta)$. Thus, $\preccurlyeq$ is a direction, and so $n$ is a net.

    It is immediate that $n$ converges pointwise to $e$ (for every point in $K$, there is a $\Gamma\in G$ that contains it). Thus, $e\in\overline{A}$.}
    \end{itemize}}
\end{enumerate}

\end{proof}

%% The Appendices part is started with the command \appendix;
%% appendix sections are then done as normal sections
%% \appendix

%% \section{}
%% \label{}

%% References
%%
%% Following citation commands can be used in the body text:
%% Usage of \cite is as follows:
%%   \cite{key}         ==>>  [#]
%%   \cite[chap. 2]{key} ==>> [#, chap. 2]
%%

%% References with BibTeX database:

\bibliographystyle{apa-good}
\bibliography{contextuality}

\end{document}